\def\sharedaffiliation{%
\end{tabular}
\begin{tabular}{c}}
\newcommand{\bigCI}{\mathrel{\text{\scalebox{1.07}{$\perp\mkern-10mu\perp$}}}}
\newcommand{\mc}[1]{\mathcal{#1}}
\newcommand{\ignore}[1]{}
\newcommand{\dan}[1]{{{\color{magenta} Dan: [{#1}]}}}
\newcommand{\babak}[1]{{\texttt{\color{blue} Babak: [{#1}]}}}
\newcommand{\ate}{{\tau_{ATE}}}
\newcommand{\rel}{{{R}}}
\newcommand{\rele}{{{R}^e}}
\newcommand{\relei}{{{R}^e_{T_i}}}
\newcommand{\relt}{{{R}^e_{\trep}}}
\newcommand{\crele}{{{R}^c}}
\newcommand{\cem}{{\textsc{Cem}}}
\newcommand{\mcem}{{\textsc{mCem}_{T_i}}}
\newcommand{\prp}{{{P}_{\trep}}}
\newcommand{\cv}{{{X}}}
\newcommand{\ccv}{{{\mc{X}}}}
\newcommand{\ccvi}{{{{cx}}}}
\newcommand{\E}{\mathbb{E}}
\newcommand{\GSQL}{{{\textrm{ZaliQL}}}}
\newcommand{\delay}{\textsc{FlightDelay}}
\newcommand{\tre}{{{\mc{T}}}}
\newcommand{\trep}{S}
\newcommand{\ccvin}{{{\mc{X}'}}}
\newcommand{\ccvu}{{{\mc{X}}}}
\newcommand{\NLOGSPACE}{{\scriptsize $  \mathrm{NLOGSPACE}$}}
\newcommand{\PTIME}{{\scriptsize $  \mathrm{PTIME}$ }}
\newcommand{\proj}[1]{{\Pi}}
\newcommand{\sel}[1]{{\sigma}}
\newcommand{\cut}[1]{}
\newcommand{\eat}[1]{}
\newcommand{\commentresolved}[1]{}
\newcommand{\ie}{{\em i.e.}} %\xspace}
\newcommand{\eg}{{\em e.g.}} %\xspace}
\newcommand{\set}[1]{\{#1\}}                    % Set (as in \set{1,2,3}).
\newcommand{\setof}[2]{\{{#1}\mid{#2}\}}        % Set (as in \setof{x}{x>0}).
\def\sharedaffiliation{%
\end{tabular}
\begin{tabular}{c}}
\begin{document}

%\title{ZaliQL: A SQL-Based Framework for Drawing Causal Inference from Big Data}
\ignore{
\numberofauthors{2}
\author{
  \alignauthor Babak Salimi\\
  \alignauthor Dan Suciu
  \sharedaffiliation
  \affaddr{Department of Computer Science \& Engineering} \\
  \affaddr{University of Washington} \\
  \affaddr{\{bsalimi, suciu\}@cs.washington.edu}
}}
% There's nothing stopping you putting the seventh, eighth, etc.
% author on the opening page (as the 'third row') but we ask,
% for aesthetic reasons that you place these 'additional authors'
% in the \additional authors block, viz.fff
% Just remember to make sure that the TOTAL number of authors
% is the number that will appear on the first page PLUS the
% number that will appear in the \additionalauthors section.

     % if too long for running head

\title{\vspace*{-1cm} {\bf ZaliQL: A SQL-Based Framework for Drawing Causal Inference from Big Data}}
\author{{\bf Babak Salimi} \ \ and \ \ {\bf Dan Suciu}\\
Department of Computer Science \& Engineering\\  Seattle,  USA\\
\hspace*{1.5cm}\{bsalimi, suciu\}@cs.washington.edu
}
\institute{University of Washington}
\maketitle

\begin{abstract}
Causal inference from observational data is a subject of active research and development in statistics and computer science. Many toolkits have been developed for this purpose that
 depends on  statistical software. However, these toolkits do not scale to large datasets. In this paper we  describe a suite of techniques for expressing causal inference tasks from observational data
  in SQL.  This suite supports the state-of-the-art methods for causal inference and run at scale
  within a database engine. In addition, we introduce
  several optimization techniques that significantly speedup causal
  inference, both in the online and offline setting.  We evaluate the quality and performance of our techniques by experiments of real datasets.

\end{abstract}

\section{Introduction}
%\dan{7 pages}
\label{sec:introduction}

%\dan{this part is mostly done, only minor edits needed}

% Big data is used today in a wide range of domains, such as natural
% sciences (physics CITE, astronomy CITE, genomics CITE, biology CITE),
% social science and in particular measurements of human activity
% (e.g. recommendation CITE, personalization CITE),
% education~\cite{clauset-2015}, marketing and economics CITE, and MORE
% HERE.  The most successful type of application of Big data is {\em
%   predictive}: the data available, the \emph{seen} data, is used to
% infer a model that can predict features and trends of data that is not
% available, the \emph{unseen} data.  This potential for predictive
% analysis has generated the huge interest we see today in Big data, and
% has, in some sense, democratized data, by stimulating a huge number of
% ``data enthusiasts'' to collect, explore, analyze, integrate, and
% visualize data.  The term Big data is often used today to refer not
% just to the traditional features like volume, variety, velocity, but
% also to its wide availability to a broad range of users.

% lg1 - original commentout below at end; take care as this can lead to confusion

Much of the success of Big data today comes from \emph{predictive or descriptive
  analytics}:  statistical models or data mining algorithms applied to data
to predict new or future observations, e.g., we observe
how users click on ads, then build a model and predict how future
users will click.  Predictive analysis/modeling is central to many scientific fields, such as
bioinformatics and natural language processing, in other  fields - such as social economics, psychology, education and environmental
science - researchers are focused on testing and evaluating {\em causal hypotheses}. While the distinction between causal
and predictive analysis has been recognized, the conflation between the two is common.

Causal inference has been studied extensively in statistics and
computer science \cite{Fisher1935design,Rubin2005,holland1986statistics,PearlBook2000,Spirtes:book01}.
Many tools perform causal inference
 using statistical software such as SAS, SPSS, or R project. However, these toolkits do not scale to large datasets.  Furthermore, in many of the most interesting Big Data settings, the data
is highly relational (e.g, social networks, biological networks, sensor networks and more) and likely to pour into SQL systems. There is a rich ecosystem of tools and organizational requirements that encourage this. Transferring
data from DBMS to statistical softwares or connecting these
softwares to DBMS can be error prone, difficult, time consuming and inefficient. For these
cases, it would be helpful to push statistical methods for causal inference into the DBMS.

Both predictive and causal analysis are needed to generate
and test theories,  policy and decision making and to evaluate hypotheses, yet each plays a different role in doing so. In fact, performing predictive analysis to address questions that are causal in nature could lead to a flood of false discovery claims. In many cases, researchers who want to discover causality from data analysis settle for predictive analysis either because they think it is causal or lack of available alternatives.

This work introduces \GSQL,\footnote{ The prefix Zali refers to
  al-Ghzali (1058-1111), a medieval Persian philosopher. It is known
  that David Hume (1711-1776), a Scottish philosopher, who gave the
  first explicit definition of causation in terms of counterfactuals,
  was heavily influenced by al-Ghzali's conception of causality
  \cite{shalizi2013advanced}.}  a SQL-based framework for drawing
causal inference that circumvents the scalability issue with the
existing tools.  ZaliQL supports state-of-the-art methods for causal
inference and runs at scale within a database engine.  We show how to
express the existing advanced causal inference methods in SQL, and
develop a series of optimization techniques allowing our system to
scale to billions of records. We evaluate our system on a real
dataset.  Before describing the contributions of this paper, we
illustrate causal inference on the following real example.

\begin{table*}[!htb] \scriptsize
    \begin{subtable}{.5\linewidth}
      \centering

       \begin{tabular}[t]{|l|l|}
  \hline
  % after \\: \hline or \cline{col1-col2} \cline{col3-col4} ...
  \bf{Attribute}   & \bf{Description}  \\  \hline
  FlightDate & Flight date   \\ \hline
  UniqueCarrier	& Unique carrier code   \\ \hline
  OriginAirportID & 	Origin airport ID  \\ \hline
  CRSDepTime & Scheduled departure time  \\ \hline
  DepTime & Actual departure time \\ \hline
    & difference in minutes between   \\
  DepDelayMinutes & scheduled and actual departure \\
   & time. Early departures set to 0  \\ \hline
  LateAircraftDelay & Late aircraft delay, in minutes  \\ \hline
  SecurityDelay & Ssecurity delay, in minutes \\ \hline
  CarrierDelay & Carrier delay, in minutes  \\ \hline
  Cancelled & Binary indicator  \\
  \hline
\end{tabular}
     \caption{Flight dataset}
    \end{subtable}%
    \begin{subtable}{.5\linewidth}
      \centering

       \begin{tabular}[t]{|l|l|}
  \hline
  % after \\: \hline or \cline{col1-col2} \cline{col3-col4} ...
  \bf{Attribute}   & \bf{Description}  \\  \hline
  Code  & Airport ID    \\ \hline
  Date	& Date of a repost   \\ \hline
  Time        & Time of a report  \\ \hline
  Visim        & Visibility in km  \\ \hline
  Tempm & Temperature in C$^{\circ}$ \\ \hline
  Wspdm         & Wind speed kph \\ \hline
  \ignore{Precipm         & Humidity \% \\ \hline}
  Pressurem         & Pressure in mBar  \\ \hline
  Precipm         & Precipitation in mm  \\ \hline
  \ignore{Rain & Binary indictor    \\ \hline
  Snow & Binary indictor \\ \hline}
  Tornado & Binary indictor \\ \hline
  Thunder & Binary indictor \\ \hline
  Hum & Humidity \% \\ \hline
  Dewpoint & De point in  C$^{\circ}$ \\ \hline
\end{tabular}
        \caption{Weather dataset}
    \end{subtable}
 \vspace{-0.1cm}   \caption{\bf{List of attributes from the flight(a) and weather(b)  datasets that are relevant to our analysis.}}
\label{tab:attlist}
\end{table*}
\vspace{-0.1cm}

\vspace{-.2cm}
\begin{example} \em \delay. \ \em   Flight delays pose a serious
and widespread problem in the United States and
 significantly strain on the national air travel system, costing society many billions of dollars each year \cite{ball2010total}.
 According to FAA statistics,\footnote{National Aviation Statistic \url{http://www.faa.gov/}}
   weather causes approximately 70\% of the delays in the US National Airspace System (NAS). The upsetting impact
   of weather conditions on aviation is well known, however quantifying
  the causal impact of different weather types on flight delays
at different airports is essential for evaluating
approaches to reduce these delays. Even though predictive analysis, in this context, might help make certain policies,
this problem is causal. We conduct this causal analysis as a running example through this paper. To this end,
we acquired flight departure details for all commercial flights within the US from 2000 to 2015
(105M entries) and integrated it with the relevant historical weather data (35M entries) (see Section \ref{sec:setup}).
These are relatively large data sets for causal inference that can not be handseled by the existing tools. Table \ref{tab:attlist} presents
the list of attributes from each data set that is relevant to our analysis.

\ignore{
We argue that this problem is causal in nature and performing
a predictive analysis can be very misleading. To conduct the analysis, we collect the flight and weather data
since 2005.

The flight data is acquired from the US Department of
Transportation [117] and consists of 168 million rows. The
weather data is gathered using the weather underground
API
2
and consists of 10 million rows

Suppose we are interested in the impact of
low pressure on flight departure delay at X airport.

 To highlight the distinction between predictive and causal nalsysis

Suppose we are interested to see if weather low pressure
has any impact on departure delay at X airport. By conducting a
}

\ignore{
This analysis is causal in nature. For example,
when we concern about the effect of low pressure on flight
departure delay, we are not interested in the difference
 between average flights departure delay when low
 barometric pressure is reported at the flight
 time and the average flight departure delay when high
  pressure is reported. Rather, researchers are interested
  to find out weather low/high barometric pressure has any causal
  impact on flight departure delay and if so, to compute the strength of that effect.

To say it in a somewhat different way, the observed distribution of
the flight departure delay, for instance,  at John F. Kennedy International Airport (JFK) in 2015 is an outcome of a complicated stochastic process. When we make
a probabilistic prediction of the departure delay, $Y$,  by
conditioning on low/high barometric pressure,\ignore{\footnote{ Barometric pressure above 1022.69 and below  1009.14 millibar is usually considered as high and low pressure respectively \cite{barometricpressureheadache:article}.}} $X$ (low barometric pressure (=1) vs. hight barometric pressure (=0))- whether we predict $\E[Y | X = x]$ or
$Pr(Y | X = x)$, with $x \in \{0,1\}$, or something more
complicated- we are just filtering the output of
the mechanisms that controlled the flight departure
 delay at JFK in 2015, picking out the cases where they
  happen to have set $X$ to the value $x$, and looking
  at what goes along with that. In fact, by applying this
   naive predictive analysis, we obtain that
   $E(Y | X = 1)-E(Y | X = 0)\backsimeq 4$, which suggests that
    barometric pressure affected the flight departure delay at
    JFK in 2015 and might be a good predictor for that.}
\end{example}

When we make predictive analysis, whether we predict $\E[Y | X = x]$ or $\textrm{Pr}(Y | X = x)$ or
something more complicated, we essentially want to know the conditional distribution of
$Y$ given $X$. On the other hand, when we make a causal
analysis, we want to understand the distribution of $Y$, if the
usual mechanisms controlling $X$ were intervened and set to $x$.
In other words, in causal analysis we are interested in {\em  interventional} conditional
distribution, e.g.,  the distribution  obtained by (hypothetically)
enforcing $X = x$ uniformly over the population.  In causal analysis, the difficulty arises   from the fact that here the objective is to estimate (unobserved)
   {\em counterfactuals} from the (observed) {\em factual} premises.

   \begin{example} \em \delay \ (Cont.).  \label{ex:press} \em Suppose
     we want to explore the effect of low-pressure on flight departure
     delays. High pressure is generally associated with clear weather,
     while low-pressure is associated with unsettled weather, e.g.,
     cloudy, rainy, or snowy
     weather\ignore{\cite{weba2,barometricpressureheadache:article}}. Therefore,
     conducting any sort of predictive analysis identifies
     low-pressure as a predictor for flight delays. However,
     low-pressure does not have any causal impact on departure delay
     (low-pressure only requires longer takeoff distance)
     \cite{FAA08}.  That is, low-pressure is most highly a correlated
     attribute with flight delays, however ZaliQL found that other
     attributes such as thunder, low-visibility, high-wind-speed and
     snow have the largest causal effect on flight delays (see
     Sec. \ref{sec:endtoend}); this is confirmed by the results
     reported by the FAA and \cite{weather}.

\end{example}

% They defined a very simple model, where we
% want to conclude \dan{Lise: what's your comment here?}  if one
% variable, called ``treatment'' causes one particular output variable,
% called ``effect'', and have developed a rich set of techniques for
% that purpose.  The end goal of their analysis is to establish the
% average causal-treatment effect.  More recently, in the AI literature
% Pearl~\cite{pearl2010introduction,PearlBook2000} has extended this
% simple model by introducing {\em causal networks}, and developing a
% logical framework for reasoning about causality.  Their aim is to
% enable complex inference in a network of causal associations.  While
% the ultimate goal is the same, to check whether a particular input
% causes a particular outcome, the methods deployed are different from
% those in statistics.

\vspace{-0.1cm}

This paper describes novel techniques implementing and optimizing
state-of-the-art causal inference methods (reviewed in Section
\ref{subsec:causalitystatistics}) in relational databases.  We make
three contributions.  First, in Section \ref{sec:BasicTechniques} we
describe the basic relational implementation of the main causal
inference methods: matching and subclassification.  Second, in Section
\ref{sec:OptimizationTechniques} we describe a suite of optimization
techniques for subclassfication, both in the online and offline
setting.  Finally, in Section \ref{sec:exp} we conduct an extensive
empirical evaluation of \GSQL, our system that implements these
techniques, on real data from the U.S. DOT and Weather Underground
\cite{flightdata,Weatherdata}.

\ignore{
This paper makes the following specific contributions: we describe a
suite of techniques for expressing the existing advanced methods for
causal inference from observational data in SQL that run at scale
within a database engine (Section \ref{sec:BasicTechniques}). Note
that we do not claim any contribution the existing methods; we
introduce several optimization techniques that significantly speedup
causal inference, both in the online and offline setting (Section
\ref{sec:OptimizationTechniques}); we validate our system
experimentally, using real data from the U.S. DOT and Weather
Underground \cite{flightdata,Weatherdata}
}

\vspace{-2mm}

\vspace{-.1cm}
\section{Background: Causality Inference in Statistics}
\label{subsec:causalitystatistics}

\ignore{
\dan{I'm inclined to remove this paragraph} \babak{I agree, the example is repeated later on} Causal inference aims to
establish a casual relationship between one variable called the {\em
  cause} or {\em treatment}, which has two possible values ($0$ and
$1$) and another variable, called the {\em effect} or {\em outcome}.
When the treatment value is $1$, we say that the treatment was
applied; otherwise, we say that the control treatment was applied.  A
causal experiment is done with respect to a population or collection
of {\em units}.  A causal relationship implies that if we intervene by
changing the treatment, then we should expect to see a change in the
outcome.  For example (referring to Table~\ref{tab:attlist}) a unit is
a flight, the treatment is Thunder (a binary attribute) and the
outcome is DepDelayMinutes (the delay in minutes): we are interested
in whether thunder causes delay, and how much.}

% \subsection{The Potential Outcome Framework} \label{sec:NRCM}
The basic causal model in statistics is called the Neyman-Rubin Causal
Model (NRCM). This framework views causal effect as comparisons
between {\em potential outcomes} defined on the same units. This
section describes the basic framework.
\ignore{
The basic causal model in statistics is called the Neyman-Rubin Causal
Model (NRCM). This framework views causal effect as comparisons
between {\em potential outcomes} defined on the same units. This
section describes the basic framework.}

\begin{table*}[t] \scriptsize
  \centering
  \begin{tabular}{|c|c|c|c|c|c|} \hline
    Unit & $T$           & $X$          & $Y(1)$                & $Y(0)$        &  $Y(1)-Y(0)$ \\
         & (Treatment)   & (Covariates) & (Treated outcome)     & (Control outcome)     & (Causal  Effect)\\
    \hline
    1   & $T_1$ & $X_1$ &  $Y_1(1)$ & $Y_1(0)$ & $Y_1(1) - Y_1(0)$ \\
    2   & $T_2$ & $X_2$ &  $Y_2(1)$ & $Y_2(0)$ & $Y_2(1) - Y_2(0)$ \\
$\ldots$&       &       &  & & \\
    N   & $T_N$ & $X_N$ &  $Y_N(1)$ & $Y_N(0)$ & $Y_N(1) - Y_N(0)$ \\ \hline
  \end{tabular} \vspace{0.3cm}
  \caption{\bf{The Neyman-Rubin Causal Model (NRCM).}}
  \label{fig:causal:inference}
\end{table*}

\paragraph*{\bf{Average Treatment Effect (ATE)}}
In the NRCM we are given a table $R(T,X,Y(0),Y(1))$ with
$N$ rows called {\em units}, indexed by $i=1 \ldots N$; see Table~
\ref{fig:causal:inference}.  The binary attribute $T$ is called {\em
  treatment assignment} ($T=1$ means the unit was treated; $T=0$ means
the unit was subjected to control); $X$ is a vector of attributes
called {\em covariates}, unaffected by treatment; and the two
attributes $Y(0), Y(1)$ represent {\em potential outcomes}: $Y(1)$ is
the outcome of the unit if it is exposed to the treatment and $Y(0)$
is the outcome when it is exposed to the control.  For any attribute
$Z$ we write $Z_i$ for the value of the $i$'s unit.  The effect caused
by the treatment for the $i$th unit, simply called the {\em treatment
  effect} for the $i$th unit, is defined as $Y_i(1)-Y_i(0)$.  The goal
of causal analysis is to compute the {\em average treatment effect
  (ATE)}:

\vspace{-.4cm}
\begin{align}
  \ate = \E[Y(1)-Y(0)] = \E[Y(1)] - \E[Y(0)] \ignore{=\frac{1}{N} \sum_i (Y_i(1)-Y_i(0))}  \label{eq:ate}
\end{align}

Throughout this paper $\E[Z]$ refers to the expected value of the
attribute $Z$ of an individual chosen at random from a large
population.  The population is unavailable to us, instead we have the
database which is typically a random sample of $N$ {\em units} from
that population.  Then $\E[Z]$ is estimated by the empirical expected
value, $\E[Z] = \sum_i Z_i/N$, where $Z_i$ is the attribute of the
$i$'th unit in the database.  In this paper we do not address the
sampling error problem, but we point out that the precision of the
estimator increases with the sample size.  Thus, a key goal of the
techniques discussed below is to ensure that expected values are
computed over sufficiently large subsets of the data.
\vspace{-.15cm}
\begin{example} \em \label{exa1} \delay \ (Cont.). Suppose we want to
  quantify the causal effect of thunder on flight departure delays. In
  this case the table is the spatio-temporal join of the flights and
  weather data in Table \ref{tab:attlist}, the treatment $T$ is the
  Thunder attribute, the outcome is
  DepDelayMinutes, while the covariates are all other attributes
  unaffected by Thunder, e.g. flights carrier, the origin airport,
  traffic, some other weather attributes such as temperature.
\end{example}

Somewhat surprisingly, the model assumes that {\em both} $Y_i(1)$ and
$Y_i(0)$ are available for each unit $i$.  For example, if the
treatment $T$ is Thunder and the outcome is DepDelayMinutes, then the
assumption is that we have {\em both} values DepDelayMinutes, when
thunder was present and when it was absent.  The inclusion of both
outcomes, factual and counterfactual, in the data model is considered
to be one of the key contributions of the NRCM.
Of course, in reality we have only one of these outcomes for each
unit, \eg, if there was a thunder during that flight then we know
$Y_i(1)$ but not $Y_i(0)$, and in this case we simply write $Y_i$ to
denote $Y_i(T_i)$.  This missing value prevents us from computing
$\ate$ using Eq.(\ref{eq:ate}), and is called the {\em fundamental
  problem of causal inference} \cite{Holland1986}.  Therefore, in
order to compute $\ate$, the statistics literature makes further
assumptions.

\paragraph*{\bf{Randomized Data}}
The strongest is the {\em independence assumption}, which states that
the treatment mechanism is independent of the potential outcomes,
i.e., $(Y(1), Y(0)) \bigCI T$. Then, it holds that
$\E[Y(1)]=\E[Y(1)|T=1]$ and similarly $\E[Y(0)]=\E[Y(0)|T=0]$ and we
have:\footnote{An additional assumption is actually needed, called
  \emph{Stable Unit Treatment Value Assumptions (SUTVA)}, which states
  that the outcome on one unit is not affected by the treatment of
  another unit, and the treatment attribute $T$ is binary.  We omit
  some details.}
% \footnote{This framework relies on an assumption that referred to
%   as \emph{Stable Unit Treatment Value Assumptions (SUTVA)}.  This
%   assumption states that: (a) the potential outcome of an individual
%   unit is not affected by the treatment assignment to the other units;
%   \ (b) there is only a single version of a treatment. Assumption (a)
%   ensures that there is no interference between the units and
%   potential outcomes are well-defined.  Assumption (b) is required to
%   make causal inference reasonable. Specifically, If this assumption
%   is violated (e.g., when patients are treated with different dosage
%   of a drug), then inference is not reasonable, because potential
%   outcomes of different units are not comparable.}
%
\vspace{-.1cm}
\begin{eqnarray}
% \nonumber % Remove numbering (before each equation)
  \ate &=& \E[Y(1)|T=1]-\E[Y(0)|T=0] \label{eq:indeass}
\end{eqnarray}
Each expectation above is easily computed from the data, for example
$\E[Y(1)|T=1] = \sum_{i: T_i=1} Y_i / N_1$ where $N_1$ is the number
of units with $T=1$.  The golden standard in causal analysis are {\em
  randomized experiments}, where treatments are assigned randomly to
the units to ensure independence; for example, in medical trials each
subject is randomly assigned the treatment or a placebo, which implies
independence.
\begin{figure*} \scriptsize
  \centering
  \begin{tabular}{|l|l|l|} \hline
    \bf{Name} &  $\delta(x_i,x_j)=$ & \bf{Comments} \\ \hline
Coarsened distance & 0 if $C(x_i)=C(x_j)$ & Where $C(x)$ is a function that coarsen \\
               & $\infty$ if $C(x_i) \neq C(x_j)$ & a vector of continues covariate~\cite{IacKinPor09}\\ \hline
Propensity score  & $|E(x_i) - E(x_j)|$ & where $E(x) = \textrm{Pr}(T = 1 | X=x)$\\
distance (PS)   &  & is the propensity score~\cite{Rubin1983b} \\ \hline
Mahalanobis Distance (MD) & $(x_i-x_j)'\Sigma^{-1} (x_i-x_j)$ & where $\Sigma=$ covariance matrix~\cite{Stuart10} \\ \hline
  \end{tabular}
  \caption{\bf{Distance Measures used in Matching}}
  \label{fig:metrics}
\end{figure*}

\paragraph*{\bf{Observational Data}}
In this paper, however, we are interested in causal analysis in {\em
  observational data}, where the mechanism used to assign treatments
to units is not known, and where independence fails in general.  For
example, thunders occur mostly in the summer, which is also high
travel season and therefore delays may be caused by the high traffic.
In that case $T$ and $Y(0)$ (the delay when thunder does not occur)
are correlated, since they are high in the summer and low in the
winter, and similarly for $T,Y(1)$.  The vast majority of datasets
available to analysts today are observational data, and this motivates
our interest in this case.  Here, the statistics literature makes the
following weaker assumption~\cite{Rubin1983b}:

\begin{description}
\item[Strong Ignorability:] Forall $x$ the following hold: \newline
  (1) Unconfoundedness $({Y(0), Y(1) \bigCI T} | X=x)$ and \newline
  (2) Overlap $0 < \textrm{Pr}(T = 1 | X=x) < 1$
\end{description}

The first part, {\em unconfoundedness}, means that, if we partition
the data by the values of the covariate attributes $X=x$, then, within
each group, the treatment assignment and the potential outcomes are independent; then we can
estimate $\ate$ by computing Eq. \ref{eq:indeass} for each value of
the covariates $X=x$ (i.e., conditioning on $X$) and averaging.  The
second part, {\em overlap} is needed to ensure that the conditional
expectations $\E[Y(1)|T=1,X=x]$ and $\E[Y(0)|T=0,X=x]$ are well
defined.  For an illustration of unconfoundedness, suppose we restrict
to flights on dates with similar weather and traffic conditions, by a
fixed carrier, from a fixed airport, etc; then it is reasonable to
assume that $T$ and $Y(0)$ are independent and so are $T$ and $Y(1)$.
In order to satisfy unconfoundedness one has to collect sufficiently
many confounding attributes in $X$ about the data in order to break
any indirect correlations between the treatment and the outcome.

% Notice the need to weight by the
% size of the group, $N_x/N$: some matching techniques discussed below
% ensure that all groups have the same size, and the formula simplifies
% to a standard average.

However, once we include sufficiently many covariate attributes $X$
(as we should) then the data becomes sparse, and many groups $X=x$ are
either empty or have a very small number of items.  For example if a
group has only treated units, then the overlap condition fails, in
other words the conditional expectation $\E[Y(0)|X=x,T=0]$ is
undefined.
% The problem with this approach is that in general the data is
% very sparse: most values of $X=x$ have only one item (thus violating the
% overlap condition) or have only very few items (thus causing the
% empirical estimate of $\E[-]$ to have low precision).
In our example, if the covariate attributes include OriginAirportID,
UniqueCarrier,Tempm ,Wspdm and Precipm, because in that case all units
within a group will have the same value of Thunder.  In general the
strong ignorability assumption is not sufficient to estimate $\ate$ on
observational data.

\paragraph*{Perfect Balancing}
The solution adopted in statistics is to increase the size of the
groups, while ensuring that strong ignorability holds {\em within each
  group}.  In other words, instead of grouping by the values of the
covariates $X$, one groups by the values of some function on the
covariates $B(X)$.  We say that the groups are strongly ignorable if
the strong ignorability condition holds within each group: 
\begin{description}
\item[Strong Ignorability in Groups:] Forall $b$ the following holds:
  \newline (1) Unconfoundedness $({Y(0), Y(1) \bigCI T} | B(X)=b)$ and
  \newline (2) Overlap $0 < \textrm{Pr}(T = 1 | B(X)=b) < 1$
\end{description}
Rosenbaum and Rubin~\cite{Rubin1983b} gave an elegant characterization
of the functions $B$ that define strongly ignorable groups, which we
review here.  We say that the groups are {\em perfectly balanced}, or
that $B$ is a {\em balancing score} if $X$ and $T$ are independent in
each group, \ie $({X \bigCI T} | B(X)=b)$ for all values $b$.
Equivalently: \vspace{-0.2cm}
\begin{description}
\item[Perfect Balanced Groups:] Within each group $b$, the
  distribution of the covariate attributes of the treated units is the
  same as the distribution of the control units:

\end{description}
{
 \begin{align}
 \forall x:  & \textrm{Pr}(X = x | T = 1, B(X)=b) =
         \textrm{Pr}(X = x | T = 0,B(X)=b) \hfill \label{eq:bs}
  \end{align}}
 \begin{theorem}~\cite[Th.3]{Rubin1983b} \label{the:sib} If the
  treatment assignment is strongly ignorable, and $B$ defines
  perfectly balanced groups, then the treatment assignment is strongly
  ignorable within each group.
\end{theorem}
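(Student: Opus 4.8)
The plan is to verify the two clauses of \emph{Strong Ignorability in Groups} separately, since they require very different amounts of work. The overlap clause is almost immediate and does not even use the balancing hypothesis. Because $B$ is a function of $X$, each group $B(X)=b$ corresponds to the set $\{x : B(x)=b\}$, and one can write $\textrm{Pr}(T=1 \mid B(X)=b)$ as the convex combination $\sum_{x : B(x)=b} \textrm{Pr}(T=1 \mid X=x)\,\textrm{Pr}(X=x \mid B(X)=b)$. Each factor $\textrm{Pr}(T=1 \mid X=x)$ lies strictly in $(0,1)$ by the fine-grained overlap in the Strong Ignorability hypothesis, the weights $\textrm{Pr}(X=x \mid B(X)=b)$ are nonnegative and sum to one, so the average also lies strictly in $(0,1)$. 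This dispatches overlap within every group.

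The substantive clause is unconfoundedness, and the key preliminary observation is that the balancing property forces the propensity score $e(x) := \textrm{Pr}(T=1 \mid X=x)$ to be constant on each group. Indeed, the defining condition $X \bigCI T \mid B(X)=b$ (equivalently Eq.(\ref{eq:bs})) says that within a group, knowing $X$ adds nothing about $T$, so $\textrm{Pr}(T=1 \mid X=x) = \textrm{Pr}(T=1 \mid B(X)=b)$ for every $x$ with $B(x)=b$. Writing $g(b) := \textrm{Pr}(T=1 \mid B(X)=b)$, this is precisely the statement $e(X) = g(B(X))$, i.e. the propensity score factors through $B$.

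With this in hand, the main computation is an iterated-conditioning argument whose goal is the identity $\textrm{Pr}(T=1 \mid Y(0), Y(1), B(X)=b) = \textrm{Pr}(T=1 \mid B(X)=b)$, which is exactly group-level unconfoundedness. I would condition the left-hand side on the finer variable $X$ inside the group; because $B(X)$ is determined by $X$, conditioning on $X$ together with $B(X)=b$ is just conditioning on $X$ ranging over $\{x : B(x)=b\}$. Fine-grained unconfoundedness, $(Y(0),Y(1)) \bigCI T \mid X$, collapses $\E[T \mid Y(0), Y(1), X]$ to $e(X)$ (using that $T$ is binary so the conditional expectation is the conditional probability); the observation of the previous paragraph then collapses $e(X)$ to the constant $g(b)$ throughout the group. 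Averaging this constant back over $X$, conditioned on $Y(0),Y(1)$ and on the group, returns $g(b) = \textrm{Pr}(T=1 \mid B(X)=b)$, as required.

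The step I expect to be the main obstacle is the bookkeeping of this nested conditioning: ensuring that the reduction of $\E[T \mid Y(0), Y(1), X]$ to $e(X)$ via fine unconfoundedness and its subsequent reduction to $g(b)$ via the balancing property are applied over the correct conditioning set and in the right order, and, in a continuous covariate setting, that the tower property is invoked legitimately rather than by informal manipulation of conditional densities. By contrast, the overlap clause and the factoring $e(X)=g(B(X))$ are routine, so essentially all the care lies in this telescoping of conditional expectations.
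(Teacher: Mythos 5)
Your proof is correct and follows essentially the same route as the paper's: the tower property of conditional expectation over $X$ within a group, fine-grained unconfoundedness to drop $(Y(0),Y(1))$ from the conditioning, and the balancing property to collapse $\E[T \mid X=x, B(X)=b]$ to $\E[T \mid B(X)=b]$. The differences are purely presentational --- you spell out the overlap clause (which the paper dismisses as trivial) and isolate the factoring of the propensity score through $B$ as an explicit lemma, whereas the paper folds both into its single chain of equalities $\E[T \mid Y=y, B=b] = \E_x[\E[T \mid Y=y, X=x, B=b]] = \E_x[\E[T \mid X=x, B=b]] = \E[T \mid B=b]$.
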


\begin{proof}
  The overlap part of the theorem is trivial, we show
  unconfoundendess.  Abbreviating $Y = (Y(0),Y(1))$, we need to prove:
  if (a) $({Y \bigCI T} | X=x)$ and (b) $({X \bigCI T} | B(X)=b)$,
  then\footnote{This is precisely the Decomposition Axiom in
    graphoids~\cite[the. 1]{PearlBook1998}; see ~\cite{DBLP:journals/ipl/GyssensNG14} for a
    discussion.}  $({Y \bigCI T} | B(X)=b)$.  (a) implies
  $\E[T | Y=y, X=x] = \E[T | X=x]$ and also
  $\E[T | Y=y, X=x, B=b] = \E[T | X=x, B=b]$ since $B$ is a function
  of $X$; (b) implies $\E[T | X=x] = \E[T | X=x, B=b] = \E[T | B=b]$.
  Therefore,
  $\E[T | Y=y, B=b] = \E_x[\E[T | Y=y,X=x,B=b]] = \E_x[\E[T | X=x,
  B=b]] = \E[T | B=b]$ proving the theorem.
\end{proof}

If the treatment assignment is strongly ignorable within each group,
then we can compute $\ate$ by computing the expectations of
Eq.~\ref{eq:ate} in each group, then taking the average (weighted by
the group probability):

\vspace{-.3cm}
{
\begin{eqnarray}
% \nonumber % Remove numbering (before each equation)
\lefteqn{\ate = \E[Y(1)-Y(0)] = \E_b[\E[Y(1)-Y(0)| B(X)=b]]} \label{eq:oatt} \\
  &=& \E_b[\E[Y(1)|B(X)=b]]-\E_b[\E[Y(0)|B(X)=b]] \nonumber \\
  &=& \E_b[\E[Y(1)|T=1,B(X)=b]]-\E_b[\E[Y(0)|T=0,B(X)=b]] \nonumber
\end{eqnarray}
}
Thus, one approach to compute causal effect in observational data is
to use group the items into balanced groups, using a balancing fuction
$B$.  Then, $\ate$ can be estimated using the formula above, which can
be translated into a straightforward SQL query using simple
selections, group-by, and aggregates over the relation $R$.  This
method is called {\em subclassification} in statistics.  The main
problem in subclassification is finding a good balancing score $B$.
Rosenbaum and Rubin proved that the best balancing score is the
function $E(x) = \textrm{Pr}(T = 1 | X=x)$, called {\em propensity
  score}.  However, in practice the propensity score $E$ is not
available directly, instead needs to be learned from the data using
logistic regression, and this leads to several
problems~\cite{king15}.  When no good balancing function can be found,
a related method is used, called matching.

\paragraph*{\bf{Matching}} We briefly describe matching
following~\cite{Rubin1983b}.  Consider some balancing score $B(X)$
(for example the propensity score).  One way to estimate the quantity
in Eq.~\ref{eq:oatt} is as follows.  First randomly sample the value
$b$, then sample one treated unit, and one control unit with $B(X)=b$.
This results in a set of treated units $i_1, i_2, \ldots, i_m$ and a
matching set of control units $j_1, j_2, \ldots, j_m$: then the
difference of their average outcome
$\sum_k (Y_{i_k}(1) - Y_{j_k}(0))/m$ is an unbiased estimator of
$\E_b[\E[Y(1)-Y(0)| B(X)=b]]$ and, hence, of $\ate$.  Notice that
there is no need to weight by the group size, because the ratio of
treated/untreated units is the same in each group.  Generally, the
matching technique computes a subset of units consisting of all
treated units and, for each treated unit, a randomly chosen sample of
fixed size of control units with the same value of balancing score.

% The motivation for this technique is to reduce the cost associated
% with measuring outcome. For example, in clinical studies typically a
% few patients that are treated with a drug exists that can be matched
% with potentially a very large number of control units. Measuring an
% outcome for all possible control units might become infeasible. Notice
% that matching or grouping based on a balancing score can be performed
% before measuring an outcome (see \cite{Rubin1983b} and references
% therein).

Historically, matching  predated
subclassification, and can be done even when no good balancing score
is available.
%
% when no good
% Ideally, treated and control units would exactly matched on some
% balancing score $B(X)$, so that the {\em sample} distribution of
% covariates $X$ would be identical in two groups. In practice, however,
% exact matches and thus groups even on a scalar balancing score are
% often impossible to obtain so methods which seeks approximate matches
% must be used \cite{Rubin1983b}.
%
The idea is to match each treated unit with one or multiple control
units with ``close'' values of the covariate attributes $X$, where
closeness is defined using some distance function $\delta(x_i,x_j)$
between the covariate values of two units $i$ and $j$.  The most
commonly used distance functions are listed in
Fig.~\ref{fig:metrics}. \footnote{The distance functions in Table
  \ref{fig:metrics} are {\em semi or pseudo-metrics}. That is they are
  symmetric; they satisfy triangle inequality and $x_i=x_j$ implies
  $\delta(x_i,x_j)=0$, but the converse does not hold.} The efficacy
of a matching method is evaluated by measuring {\em degree of
  imbalance} i.e., the differences between the distribution of
covariates in two groups in the matched subset. Since there is no
generic metric to compare two distributions, measures such as mean,
skewness, quantile and multivariate histogram are used for this
purpose.  A rule of thumb is to
evaluate different distance metrics and matching methods until a
well-balance matched subset with a reasonable size obtained.

\ignore{
\paragraph*{Matching}
Instead of grouping by some balancing score $B(x)$, several techniques
in the statistics literature compute the groups using {\em
  matching}~\cite{Rubin1974}, which computes a subset of units
consisting of all treated units and, for each treated unit, a random
sample of fixed size of control units with ``close'' values of the
covariate attributes, where closeness is defined using one of the
distance measures $\delta(i,j)$ between units listed in
Fig.~\ref{fig:metrics}.  To see the intuition behind matching, suppose
the data is dense, and we can match each treated unit with one control
unit with the same values of the covariate attributes.  Then, each
group (defined by the value of the covariates) is perfectly balanced,
and moreover the ratio of treated/control units is the same in all
groups, hence we do not care about the group sizes in
Eq.\ref{eq:oatt}.  In general, matching relaxes this principle, by
matching each treated unit with a fixed number of control units (still
ensuring a constant ratio) and relaxing the equality condition to
closeness according to some measure.  Thus, the main computational
challenge in causal inference consists of matching: we describe it in
detail in Sec.~\ref{?????}.  \dan{I don't like how this paragraph come out}}

\paragraph*{\bf{Summary}}
The goal of causal analysis is to compute $\ate$ (Eq.~\ref{eq:ate})
and the main challenge is that each record misses one of the outcomes,
$Y(1)$ or $Y(0)$.  A precondition to overcome is to ensure strong
ignorability, by collecting sufficiently many covariate attributes
$X$.  For the modern data analyst this often means integrating the
data with many other data sources, to have as much information
available as possible about each unit.  One caveat is that one should
not include attributes that are themselves affected by the treatment;
the principled method for choosing the covariates is based on
graphical models~\cite{de2011covariate}.  Once the data is properly
prepared, the main challenge in causal analysis  \footnote{There are
  some model-based alternatives to matching e.g., covariates
  adjustment on random samples.  However, matching have several nice
  properties that makes it more appealing in practice (see,
  \cite{Rubin1983b}).} is {\em matching} data records such as to
ensure that the distribution of the covariates attributes of the
treated and untreated units are as close as possible
(Eq.(\ref{eq:bs})).  This will be the focus of the rest of our paper.
Once matching is performed, $\ate$ can be computed using
Eq.(\ref{eq:oatt}).  Thus, the main computational challenge in causal
analysis is the matching phase, and this paper describes scalable
techniques for performing matching in a relational database system.

% \dan{the old subsection with distance measures is commented in the
%   Latex, and replaced by the figure.  Still need to add citations, and
%   to discuss Coarsening Exact Distance (the problem is that it is
%   technically not a distance).}
%
\ignore{
\subsection{Distance Measures in Matching}
\label{sec:matching}

Matching is a non-parametric pre-processing step that identifies
data subsets from which causal inference can be drawn with reduced bias and model-dependence \cite{ho2005}.
In this approach, the goal is to match every treated units with one (or more) control unit(s)  with similar values of covariates. Thus,
the first step  is to choose a measure  of similarity or closeness between the
units, i.e., a metric that determines whether an individual is a good match for another. In an ideal world, for each treated unit, there exists a control unit with exactly the same values of covariates. Then,  the {\em exact distance}, defined below, can be used for matching.

\vspace{.2cm}
\noindent  {\bf Exact Distance:}  \[ \delta(i,j) =
  \begin{cases}
    0       & \quad \text{if } X_i=X_j\\
    	\infty  & \quad \text{if } X_i \not =X_j\\

  \end{cases}
\]

\noindent This approach fails in finite samples if the dimensionality of $X$ is large;
it is simply impossible if $X$ contains continuous covariates. In general, alternative
distance measures must be used.\\

\ignore{
Next we briefly overview some  techniques that are representative of the large
variety used in the literature based on \cite{Rubin1983b,Stuart10,Sekhon08,Shalizi13,king15}.}

\ignore{
Matching methods consist of the following steps: \ (1) choosing a methods of similarity or closeness between the
 units i.e., a metric that determine weather an individual is a good match for
  another; \ (2) devising an strategy to match units given a
  measure of closeness; \ (3) checking the balance of the matched data and
  iterating steps 1 and 2 until a well-matched sample obtained.}

\ignore{
The ultimate goal of the matching technies goal then is to maximize both balance, the similarity between the multivariate
distributions of the treated and control units, and the size of the matched data set.
Any remaining imbalance must be dealt with by statistical modeling assumptions. The
primary advantage of matching is that it greatly reduces the dependence of our conclusions
on these assumptions (Ho et al., 2007).}

\ignore{
Matching is a nonparametric method of controlling for the confounding influence of pretreatment control variables in observational data. The key goal of matching is to prune observations from the data so that the remaining data have better balance between the treated and control groups, meaning that the empirical distributions of the covariates (X) in the groups are more similar. Exactly balanced data mean that controlling further for X is unnecessary (since it is unrelated to the treatment variable), and so a simple difference in means on the matched data can estimate the causal effect

Matching is a statistical technique which is used to evaluate the effect of a treatment by comparing the treated and the non-treated units in an observational study or quasi-experiment (i.e. when the treatment is not randomly assigned). The goal of matching is, for every treated unit, to find one (or more) non-treated unit(s) with similar observable characteristics against whom the effect of the treatment can be assessed. By matching treated units to similar non-treated units, matching enables a comparison of outcomes among treated and non-treated units to estimate the effect of the treatment reducing bias due to confounding

As is widely recognized, matching methods for causal inference are best applied with
an extensive, iterative, and typically manual search across different matching solutions,
simultaneously seeking to maximize covariate balance between the treated and control
groups and the matched sample size.

To evaluate a matching method, we confront the same bias-variance trade-off as exists
with most statistical methods. Thus, instead of bias, we focus on reducing the closely related quantity, imbalance, the difference between the multivariate empirical densities of the treated and control units (for the specific mathematical relationship between the two, see Imai, King and Stuart, 2008). Similarly, the variance of the causal effect estimator can be reduced when heterogeneous observations are pruned by matching, but a too small matched sample size can inflate the variance

Thus, in matching, the bias-variance trade off is affected through the crucial trade off between the degree of imbalance and the size of the matched sample. }

\ignore{
\noindent  {\bf Exact Distance:}  \[ \delta(i,j) =
  \begin{cases}
    0       & \quad \text{if } X_i=X_j\\
    	\infty  & \quad \text{if } X_i \not =X_j\\

  \end{cases}
\]
}
 \ignore{In this approach each treated unit is matched to all
possible control units with exactly the same values on all the covariates, forming subclasses
such that within each subclass all units (treatment and control) have the same covariate values.  This approach fails in finite samples if the dimensionality of $X$ is large
and is simply impossible if $X$ contains continuous covariates. Thus, in general, alternative
methods must be used.\\}

\ignore{
If your controls are all normally distributed (more precisely, follow
and elliptic distribution) and your sample size is large enough, then
matching on Mahalanobis distance has the Equal Percent Bias
Reduction (EPBR) property}

\noindent  {\bf Propensity Score (PS):}

$$\delta(i,j) =|e(x_i)-e(x_j)|$$

\noindent where $ e(x_i)=\textrm{Pr}(T_i=1|X=x_i)$, the {\em propensity score},  is the probability of a
 unit (e.g., person, classroom, school) being assigned to a particular treatment
 given a set of observed covariates. It is shown that that propensity score is a {\em balancing score} \cite{Rubin1983b}.
  A balancing score, $b(x_i)$, is a function of an observed covariates such
  that the distribution of $x$ given $b(x_i)$ is the same for treated and
  control units. It has been shown that if
  treatment assignment is strongly ignorable given $x_i$, then it is strongly
   ignorable given any balancing score \cite{Rubin1983b}. Therefore, instead of conditioning on $x_i$, we can condition on the balancing score and obtain an unbiased
     estimate of the average treatment effect at a particular value of balancing score.   Notice that when we have a sample from a population, the propensity score is unknown. In practice, it is almost always estimated by assuming a logistic regression model.

\ignore{
The distances described above can be combined meaning that exact
matching can be done on key covariates such as carrier or origin and
 within each group matching based on PS can be applied. If key covariates of interest
 are continues MD matching can be applied with propensity score calipers. Both of
  these combined techniques would yield well balance on the key
  covariates and relatively well balance on the rest.}

\vspace{0.2cm}
\noindent  {\bf Mahalanobis Distance (MD):}
 $$\delta(i,j) =(x_i-x_j)'\Sigma^{-1} (x_i-x_j)$$

\noindent   MD generalize Euclidian distance. It occurs when the correlation in the data is also taken into account using the inverse of the variance-covariance matrix $\Sigma$.

\ignore{\paragraph*{Matching methods:}}

\vspace{.2cm}
\noindent  {\bf Coarsen Exact Distance:} This metric has been recently proposed  in \cite{IacKinPor09}. Its central idea is to perform exact matching on broader ranges of  variables. Specifically, each covariate is temporally coarsened
according to a pre-specified set of cutpoints. These cutpoints are  chosen either by the analyst or
computed using automatic discretization methods. \ignore{Coarsening or bining is widely used in data analysis. However,
unlike the general use of bining in data analysis which involved with permanent removal of information, in this
approach, coarsening is done only for the sake of matching and the analysis is done with the
actual data.} \ignore{For example, in the context of weather forecast, the following coarsening of barometric pressure of corresponds categories that provide a reasonable weather forecast \cite{barometricpressureheadache:article}:

 \vspace{0.3cm}
  \hspace{-0.4cm}
 {\tiny
 \begin{tabular}{|c|c|c|}
   \hline
   % after \\: \hline or \cline{col1-col2} \cline{col3-col4} ...
   Range& Bucket & Forecast \\
   \hline
   Over 1022.69 mbar & High  & Cloudy, Warmer \\
   Between 1022.69 mbar  and  1009.14 mbar& Mediate & Precipitation likely  \\
   Under  1009.14 mbar  & Low & Storm \\
   \hline
 \end{tabular}
}

 \vspace{0.3cm}}

Once a distance measure is chosen, the next step is to choose a method to perform the actual matching.  These methods are discussed in detail in Section \ref{sec:BasicTechniques}. The final step is to evaluate whether a matching method was successful, which we discuss in the next section.

\subsection{Evaluating Matching Techniques}

As we discussed before, matching aims to prune the input data such that what remains
has {\em balance} between the treated and control groups, meaning that the empirical distributions
of the covariates in the groups are more similar. There is no general agreement on the metric to choose for comparing two distributions. In practice, metrics such as mean, skewness, quantile and  multivariate histogram of the distribution of covarites in  treated and control group are used to measure the degree of imbalance.

To evaluate a matching method, we confront the same bias-variance trade-off as exists
with most statistical methods \cite{king15}.  The variance is related to  the number of matched units obtained, while  the bias is related to the degree of imbalance. \ignore{, i.e., the difference between the distribution of covariates between  treated and control units in the matched sample. For example, exact matching completely removes the bias by ensuring a perfect balance. However, in practice, zero or few units might be obtained by applying this method, which makes the estimation of average treatment effect either impossible or sensitive to sampling variance. } In particular, the bias of a causal effect estimator can be reduced by matching, but a matched sample that is too small can inflate the variance.
A rule of thumb is to evaluate different metrics and matching strategies until a well-balance matched with reasonable size obtained.

\ignore{Thus, in matching, we have a bias-variance trade off through the crucial trade off between the degree of imbalance and the size of the matched sample.}
\ignore{
Propensity score:  Matching methods based on propensity score is by far the prominent approach in practice. However, using propensity score for matching has been object to some criticisms lately \cite{king15}. It has been shown that propensity score matching approximate a randomize experiment i.e., the balance is only granted across the spectrum of all possible samples. Whereas, other matching methods, we mentioned in this paper, approximate a fully blocked experiment \cite{king15} (see Section \ref{sec:NRCM}). In observational setting since we have only one sample at hand, other matching methods would dominate propensity score matching.

Mahalanobis distance: It is known that this metric may exhibit some odd behavior when: the covariates
are not normally distributed;  there are relatively large number of covariates; there are extremely outlying
observations; when there are dichotomous variables \cite{rosenbaum2009}. \ignore{\em This measure is only mentioned for the
sake of completeness and will not be used in the subsequent discussions of this paper.}

Coarsening:
Several nice properties of matching by coarsening has been established in \cite{IacKinPor09}. For instance, unlike other
 approaches, the causal effect estimation error and the imbalance is bound by the user, therefor
  the labourers process of matching and checking for balance is not needed anymore. More
   importantly, this approach meets the {\em congruous principle} according
   which there should be a congruous between analysis space
   and data space. Notice that MD and PS project a vector from
    multidimensional space into a scalar value. It has been argued that methods violating the congruous principle
    may to lead to less robust inference with suboptimal and highly counterintuitive
     properties \cite{IacKinPor09}. Our experience with the flight and weather
     dampest confirms the mentioned theories.

\vspace{.2cm}

\ignore{\noindent {\bf Nearest Neighbor and optimal Matching:} The most common matching method is $k:1$ nearest
neighbor matching \cite{Rubin1974}. This method selects the $k$  best control matches for each individual
in the treatment group. Matches
are chosen for each treated unit one at a time, with a pre-specified ordering such
as largest to smallest. At each matching step we choose the control unit
that is not yet matched but is closest to the treated unit on the
distance measure. In its simplest form, 1:1 nearest neighbor matching selects for each treated individual $i$ the
control individual with the smallest distance from individual $i$.

In nearest neighbor matching, closest control match for each treated unit is chosen one at a time, without trying to minimize
a global distance measure. In addition the order in which the treated units are matched may change
the quality of the matching.  {\em Optimal matching} circumvents these issues by taking into account the overall set of
matches when choosing individual matches, minimizing a global distance measure
\cite{rosenbaum2002observational}. Greedy matching performs poorly when there is intense
competition for controls, and performs well when there is little competition \cite{Rosenbaum93}. However, it has been shown that optimal matching does not in general
perform any better than greedy matching in terms of creating {\em groups} with good balance \cite{Rosenbaum93}.

Notice that in the mentioned  methods,  matching is done without replacement. When there are a few control individuals comparable to the treated individuals matching can be done with replacement. While this would decrease the bias, it makes the analysis more complicated. In this work we only consider matching without replacement.

\vspace{.2cm}
\noindent { \bf Subclassification and full matching:}
 It is easy to see that nearest neighbor matching does not necessarily use all the data, meaning that many control units even though in the range of a treatment unit will be discarded. In subclassification, the idea is to form subclasses, such that in each the distribution  of covariates for the treated and control groups are as similar as possible. In practice usually 5-10 subclasses are used. However, with larger sample sizes more subclasses (e.g., 10-20) may be feasible and appropriate \cite{Lunceford04}. In a particular form of subclassification known as {\em full matching}, the number of subclasses are chosen optimally \cite{rosenbaum2002observational}.

\vspace{.2cm}

\noindent { \bf Objections to the described methods:} In practice, propensity score matching is
the most commonly used matching method  \cite{king15}. It is wildly recognized that, matching methods
based on propensity score (and in general) are best applied with extensive, iterative and typically
manual search across different matching techniques with the goal of maximizing covariates
 balance between treated and control group and the matched sample size. However, the focus
 of the most of these methods is to maximize the size of the matched
 data as oppose to the covariates balance \cite{king15}. The balance is only checked after the fact
 and this process should be iterated until a well-balanced matched is obtained.

As a matter of fact, propensity score matching approximate a
randomize experiment. Whereas other matching methods approximate a fully blocked experiment and dominate propensity score
matching \cite{king15}. Although, among these methods is has lees applicability. In fact,
it is known that it may exhibit some odd behavior when the covariates
are not normally distributed or the are relatively large number of covariates.

\vspace{.2cm}
\noindent { \bf Coarsen Exact Matching (CEM):} The central idea in CEM is to do exact matching on
broader ranges of the variables. Specifically, each covariates is temporally coarsened
according to a prespecified set of cutpoints. These cutpoints are either chosen by the analysis or
computed using automated histogram methods. Coarsening or bining is widely used in data analysis. However,
unlike the general use of bining in data analysis which involved with permanent removal of information, in this
approach, coarsening is done only for the sake of matching and the analysis is done with the
 actual data.   Next, the units are grouped into strata, each of which has the same values of the
 coarsened covariates. Finally, any stratum that does not contain at least one treated and
 one control unit will be pruned from data \cite{IacKinPor09}.

Several nice properties of CEM has been established in \cite{IacKinPor09}. For instance, unlike other
 approaches, the causal effect estimation error and the imbalance is bound by the user, therefor
  the labourers process of matching and checking for balance is not needed anymore. More
   importantly, CEM meets the {\em congruous principle} according
   which there should be a congruous between analysis space
   and data space. Notice that MD and PS project a vector from
    multidimensional space into a scalar value. It has been argued that methods violating the congruous principle
    may to lead to less robust inference with suboptimal and highly counterintuitive
     properties \cite{IacKinPor09}. Our experience with the flight and weather
     dampest confirms the mentioned theories. Therefore, in this work we mainly focus on the CEM method.

\ignore{
 only approximate a random experiment while other methods approximate
fully blocked experiment. It is known that fully blocked experiment dominate randomize experiment. This is simply because randomize experiment grantee covariates balance on average taken over the sampling distribution, whereas fully blocked experiment grantees covariates balance on a single sampled. Therefore, other matching methods PSM in observation setting in which only one sample is at hand.

While Mahalanobis distance matching would approximate a fully blocked experiment, }

\ignore{

\paragraph*{Nearest Neighbor Matching:}
This method selects the r (default=1) best control matches for each individual
in the treatment group (excluding those discarded using the discard option). Matching is
done using a distance measure specified by the distance option (default=logit). Matches
are chosen for each treated unit one at a time, with the order specified by the m.order
command (default=largest to smallest). At each matching step we choose the control unit
that is not yet matched but is closest to the treated unit on the distance measure.

\paragraph*{Coarsened Exact Matching:} This method is a Monotonoic Imbalance Bounding (MIB) matching
method — which means that the balance between the treated and control groups is chosen by
the user ex ante rather than discovered through the usual laborious process of checking after
the fact and repeatedly reestimating, and so that adjusting the imbalance on one variable has
no effect on the maximum imbalance of any other.

\paragraph*{Subclassification:}

}

}

}
} 

\section{Basic Techniques}
\label{sec:BasicTechniques}

% \subsection{Declarative Specification of the Matching Methods}
% \label{sec:BasicTechniquesdef}

In this section we review the matching and subclassification
techniques used in causal inference and propose several relational
encodings, discussing their pros and cons.  Historically, matching was
introduced before subclassification, so we present them in this order.
Subclassification is the dominant technique in use today: we will
discuss optimizations for subclassification in the next section.

We consider a single relation $\rel(ID,T,\cv,Y)$,
where $ID$ is an integer-valued primary-key, $T$ and $\cv$
respectively denote the treatment and covariate attributes as
described in the NRCM (cf. Section \ref{subsec:causalitystatistics}),
and $Y$ represent the available outcome, i.e., $Y=Y(z)$ for iff $T=z$. For each matching method, we define a view over $\rel$ such that
materializing the extension of the view over any instance of $\rel$
computes a corresponding matched subset of the instance.

\vspace{0.3cm}

\ignore{

\begin{figure*}
\centering

\begin{subfigure}[t]{.6\textwidth}
\centering

  \begin{alltt} \scriptsize
CREATE VIEW \(\nnm\sp{r}\) AS
WITH tmp0 AS
  (SELECT treated.ID t,
          control.ID c,
          |treated.ps - control.ps|  distance
   FROM \(\rel\) control,
        \(\rel\) treated
   WHERE control.t=0
     AND treated.t=1
     AND |treated.ps - control.ps|< \(caliper\)),
      tmp2 AS
  (SELECT *, rank() OVER (PARTITION BY control.ID
                       ORDER BY distance) rank
   FROM tmp0)
SELECT *
FROM tmp2
WHERE rank \(\leq k\);
\end{alltt}

\begin{subfigure}[t]{.6\textwidth}
  \begin{alltt} \scriptsize
SELECT count(*)
FROM test c,
     test t
WHERE c.t=0
  AND t.t=1
  AND c.ID!=t.ID
  AND c.ps <-> t.ps<0.1
  AND
    (SELECT count(*)
     FROM test z
     WHERE z.t=0
       AND t.ps <-> z.ps < c.ps <-> t.ps)<3 ;
\end{alltt}
\end{subfigure}
\label{fig:causal:inference}
\caption{$K$-NNM with replacement}\label{fig:fig_a}
\end{subfigure}
\begin{subfigure}[t]{.4\textwidth}
\centering
  \begin{alltt} \scriptsize
CREATE VIEW \(\nnm\) AS
(WITH tmp0 AS (
SELECT *,
       max(ID) OVER w grp,
       random() weight,
FROM \(\rel\) WINDOW w AS (PARTITION BY \(x\sb{1},\ldots,x\sb{k}\)),
     tmp1 AS
  (SELECT *,
           rank() OVER (PARTITION BY grp, ps,
           treatment ORDER BY weight)
   FROM tmp0),
     tmp2 AS
  (SELECT *,
           max(t) OVER w maxt,
           min(t) OVER w mint
   FROM tmp1 WINDOW w AS (PARTITION BY grp,
                                       rank,
                                       ps))
SELECT ID,t,\( \cv\),o
FROM tmp2
WHERE maxt1!=mint1)
\end{alltt}

\caption{$K$-NNM without replacement}\label{fig:fig_b}
\end{subfigure}

\medskip

\begin{minipage}[t]{.4\textwidth}
\caption{SQl implementation of variant $K$-NNM.}
\end{minipage}

\end{figure*}
}

\begin{figure}
  \centering
\begin{alltt} 
CREATE VIEW \(\nnmwr\) AS
SELECT *
   FROM \(\rel\) AS control,\(\rel\) AS treated
WHERE control.T=0  
  AND treated.T=1
  AND \(\delta\)(treated\(.\cv\), control\(.\cv)\) < \(caliper\)
  AND (SELECT count(*)
     FROM \(\rel\) AS z
     WHERE z.T=0
       AND \(\delta\)(treated\(.\cv\),z\(.\cv)\) < \(\delta\)(treated\(.\cv\),control\(.\cv)\) \(\leq k\))
\end{alltt} \vspace{-.5cm}
\center{\bf (a)  Anti-join based}
\vspace{-.4cm}
\begin{alltt} 
CREATE VIEW \(\nnmwr\) AS
WITH potential_matches AS
  (SELECT treated.ID AS tID,
          control.ID AS cID,
         \(\delta\)(treated\(.\cv\),control\(.\cv)\)  AS distance
   FROM \(\rel\) AS control,
        \(\rel\) AS treated
   WHERE control.T=0 
     AND treated.T=1
     AND \(\delta\)(treated\(.\cv\),control\(.\cv)\) < \(caliper\)),
      ranked_potential_matches AS
  (SELECT *, ROW_NUMBER() OVER (PARTITION BY tID
                       ORDER BY distance) AS order
   FROM potential_matches)
SELECT *
FROM ranked_potential_matches
WHERE order \(\leq k\)
\end{alltt}
\vspace{-.5cm}
\center{\bf (b) Window function based}
  \caption{\bf SQL implementation of NNMWR.}\label{fig:nnmwr}
\end{figure}

\subsection{Nearest Neighbor Matching}
\label{sec:nnm}
The most common matching method is that of $k:1$ nearest neighbor
matching (NNM) \cite{Rubin1983b,ho2005,Stuart10}. This method selects
the $k$ nearest \ignore{(wrt. one of the distance metrics discussed in
  Section \ref{sec:matching})} control matches for each treated
unit and can be done with or without replacement; we denote
them respectively by NNMWR and NNMNR.  In the former case, a control
unit can be used more than once as a match, while in the latter case
it is considered only once. Matching with replacement can often
decrease bias because controls that look similar to the treated units
can be used multiple times.  This method is helpful in settings where
there are few control units available. However, since control units
are no longer independent, complex inference is required to estimate
the causal effect \cite{dehejia99}. In practice, matching is usually
performed without replacement.  Notice that NNM faces the risk of bad
matches if the closest neighbor is far away. This issue can be
resolved by imposing a tolerance level on the maximum distance, known
as the {\em caliper} (see e.g., \cite{lunt2014selecting}). There are some rules of thumb for choosing the
calipers (see e.g., \cite{lunt2014selecting}).

{\bf NNM With Replacement}
We propose two alternative ways for computing  NNMWR in SQL, shown in
Figure \ref{fig:nnmwr}. In Figure \ref{fig:nnmwr}(a), each treated unit is joined with $k$ closest control units that are closer than the caliper. In this solution, nearest control units are identified by means of an anti-join. \ignore{Note that when $k=1$ the aggregate expression may be replaced by NOT EXISTS.}  In Figure \ref{fig:nnmwr}(b), all potential matches and their distances are identified by
joining the treated with the control units that are closer than
the caliper. Then, this set is sorted into ascending order of
distances.  In addition, the order of each row in the sorted set is identified
using the window function {\verb|ROW_NUMBER|}. Finally, all units with the order of less than or equal to $k$ are selected as the matched units.

The ani-join based statement requires a three-way join. \ignore{However, in a particular case when
  $k=1$ and the caliper is zero, this solution can becomes linear for
  matching based on propensity score.} The window function based
solution has a quadratic complexity. It requires a nested-loop to
perform a spatial-join and a window aggregate to impose
minimality. Note that window functions are typically implemented in
DBMS using a sort algorithm, and even more efficient algorithms have
been recently proposed~\cite{Neumann15}.

\begin{figure}
  \centering
\begin{alltt} 
CREATE VIEW \(\nnmnr\)
AS WITH potential_matches AS
  (SELECT treated.ID AS tID, 
          control.ID AS cID,
          \(\delta(treated.\cv,control.\cv)\)  AS distance
   FROM \(\rel\) AS control, 
        \(\rel\) AS treated
   WHERE control.T=0 AND treated.T=1
     AND \(\delta(treated.\cv,control.\cv)\) < \(caliper\))),
            ordered_potential_matches AS
  (SELECT *, ROW_NUMBER() over (ORDER BY distance) AS order
   FROM potential_matches)
SELECT *
FROM ordered_potential_matches AS rp
WHERE NOT EXISTS
    (SELECT *
     FROM ordered_potential_matches AS z
     WHERE z.order < rp.order AND z.cID=rp.cID)
  AND (SELECT count(*)
     FROM ordered_potential_matches AS rp
     WHERE z.order < rp.order 
           AND z.tID=rp.tID)\( \leq k\)
\end{alltt} 
  \caption{\bf SQL implementation of NNMNR}\label{fig:nnmnr}
\end{figure}

{\bf NNM Without Replacement} Expressing NNMNR in a declarative manner
can be complicated. In fact, this method aims
to minimize the average absolute distance between matched units and can performed in either greedy or optimal manner. The latter is called {\em optimal matching}
\cite{Rosenbaum93}. Before we describe our proposed SQL implementation for NNMWR, we prove
that optimal matching is not expressible in SQL: this justifies
focusing on approximate matches.  For our inexpressibility result,
notice that in the special case when $k=1$ NNMWR is the {\em weighted
  bipartite graph matching problem (WBGM)}, which is defined as
follows: given a bipartite graph $G=(V,E)$ and a weight function
$w: E \rightarrow \mathbb{R}_{>0}$, find a set of vertex-disjoint
edges $M \subseteq E$ such that $M$ minimise the total weight
$w(M) = \sum_{e \in M} w(e)$.  The exact complexity of this problem
is unknown (see, e.g. \cite{Avis83}), however we prove a \NLOGSPACE\
lower bound:

\begin{proposition} \label{pro:om}
Computing  maximum weight matching for  weighted bipartite graphs is hard for \NLOGSPACE.
\end{proposition}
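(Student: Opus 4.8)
The plan is to establish \NLOGSPACE-hardness by a logspace many-one reduction from directed $s$-$t$ reachability (\textsc{STCon}), the canonical \NLOGSPACE-complete problem. It suffices to treat the \emph{decision} version of the optimization problem, since hardness of the decision version implies the function is at least as hard; and it suffices to use \emph{unit} weights, for which a maximum weight matching is exactly a maximum cardinality matching. Because unit-weight instances are a special case of the general weighted problem, hardness of the former immediately yields hardness of the latter.

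Given a directed graph $G=(V,E)$ with distinguished vertices $s,t$ (we may assume $s\neq t$), I would build a bipartite graph whose left and right vertex sets are both copies of $V$, placing an edge between left-$u$ and right-$v$ exactly when either $u=v$ (an \emph{identity} edge) or $(u,v)\in E$. I then delete the left copy of $t$ and the right copy of $s$, obtaining $H'$ whose two sides are $V\setminus\{t\}$ and $V\setminus\{s\}$, and set all weights to $1$. This gadget is clearly computable in logspace (indeed it is first-order definable) from $G$. The central claim is the equivalence
\[
 H' \text{ has a perfect matching} \iff t \text{ is reachable from } s \text{ in } G,
\]
equivalently, the maximum weight matching of $H'$ has value $|V|-1$ iff $t$ is reachable from $s$.

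For correctness I would view a perfect matching of $H'$ as a bijection $\pi\colon V\setminus\{t\}\to V\setminus\{s\}$ that sends each $u$ either to itself (an identity edge) or along an edge of $G$. Discarding the fixed points $\pi(u)=u$ and orienting each remaining pair as an arc $u\to\pi(u)$ yields a subgraph of $G$ in which $s$ has in-degree $0$ and out-degree $1$, $t$ has out-degree $0$ and in-degree $1$, and every other non-fixed vertex has in- and out-degree $1$; such a bijective structure decomposes into a single directed $s$-$t$ path together with vertex-disjoint cycles, all of whose arcs lie in $E$, forcing a path from $s$ to $t$. Conversely, a simple path $s=v_0\to v_1\to\cdots\to v_k=t$ in $G$ yields the matching $\pi(v_i)=v_{i+1}$ for $i<k$ and $\pi(u)=u$ for every other $u$ (each such $u$ differs from both $s$ and $t$, so its identity edge survives in $H'$), which one checks is a valid bijection $V\setminus\{t\}\to V\setminus\{s\}$.

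The main obstacle is the structural half of this equivalence: arguing cleanly that the non-identity arcs of any perfect matching decompose into an $s$-$t$ path plus cycles, so that reachability is both necessary and sufficient. Everything else---logspace computability of the gadget, and the observation that the weighted problem subsumes the unit-weight special case---is routine. Since \NLOGSPACE\ is closed under complement by Immerman--Szelepcs\'enyi, the orientation of the reduction is immaterial. I would finally remark that the reduction is first-order definable, which is precisely what makes this lower bound useful later: it rules out expressing optimal matching in plain relational algebra / SQL, whose expressive power lies strictly below \NLOGSPACE.
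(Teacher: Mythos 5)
Your proof is correct, and it follows the same overall strategy as the paper: reduce directed $s$-$t$ reachability to bipartite perfect matching (the unit-weight special case of maximum-weight matching), using two copies of $V$ joined by identity edges plus the edges of $G$. Where you genuinely depart from the paper is the treatment of the endpoints, and your choice is the better one. The paper keeps every vertex on both sides, omits only the identity edges at $s$ and $t$, and adds a wrap-around edge $(t,s')$; you instead delete the left copy of $t$ and the right copy of $s$. With the paper's gadget the intended equivalence actually fails on cyclic inputs, because nothing forces a perfect matching to use $(t,s')$: for $V=\{s,a,t,b\}$ and $E=\{(s,a),(a,s),(t,b),(b,t)\}$, the set $\{(s,a'),(a,s'),(t,b'),(b,t')\}$ is a perfect matching of the paper's $G'$ even though $t$ is unreachable from $s$; the step ``this sequence must eventually reach $t'$'' in the paper's argument breaks because the sequence can return to $s'$ along the edge $(a,s')$. (The paper's construction can be repaired by first deleting all arcs entering $s$ and leaving $t$, a logspace preprocessing that preserves reachability, but the paper does not do this.) Your deletion of left-$t$ and right-$s$ removes exactly these spurious matchings, and your degree argument --- $s$ has in-degree $0$ and out-degree $1$, $t$ has in-degree $1$ and out-degree $0$, every other non-fixed vertex has in- and out-degree $1$, and injectivity of the matching makes fixed points isolated --- correctly yields the decomposition into one $s$-$t$ path plus vertex-disjoint cycles, so your equivalence holds for arbitrary, possibly cyclic, directed graphs; the ``main obstacle'' you flag is in fact fully discharged by this argument. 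Two minor remarks: the appeal to Immerman--Szelepcs\'enyi is unnecessary, since you reduce reachability itself (not its complement) to the existence of a perfect matching; and your closing observation about first-order definability is precisely the use the paper makes of this proposition, namely that optimal matching cannot be expressed in SQL without recursion.
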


\begin{proof} The following {\em Graph Reachability Problem} is known
  to be \NLOGSPACE\ complete: given a directed graph $G(V,E)$ and two
  nodes $s,t$, check if there exists a path from $s$ to $t$.  We prove
  a reduction from graph reachability to the {\em bipartite perfect
    matching problem} which is a special case of optimal WBGM. For
  that we construct the graph $G'$ with $V= V \cup V'$ where, $V'$ is
  a copy of $V$ with primed labels and
  $E'= \{(x,x')| \forall x \in V -\{s,t\} \} \cup \{(x,y')| \forall
  (x,y) \in E\} \cup \{(t,s')\}$.
  Notice that the subset $\setof{(x,x')}{x \in V} \subseteq E$ is
  almost a perfect matching, except that it misses the nodes $s, t'$.
  We prove: there exists a path from $s$ to $t$ in $G$ iff $G'$ has a
  perfect matching.  First assume $P=s, x_1, x_2, \ldots, x_m, t$ is a
  path in $G$. Then the following forms a perfect matching in $G'$:
  $M=\set{(s,x_1'), (x_1,x_2'), \ldots, (x_{m},t'), (t,s')} \cup
  \setof{(y,y')}{y \not\in \set{s,x_1, \ldots, x_m,t}}$.
  Conversely, assume $G'$ has a perfect matching. Write
  $f : V \rightarrow V'$ the corresponding bijection, i.e. every $x$
  is matched to $y'=f(x)$.  Denoting the nodes in $V$ as
  $V=\set{x_1, \ldots, x_n}$, we construct inductively the following
  sequence: $x_{i_1}' = f(s)$, $x_{i_2}' = f(x_{i_3})$, \ldots,
  $x_{i_{k+1}}' = f(x_{i_k})$.  Then $i_1, i_2, \ldots$ are distinct
  (since $f$ is a matching), hence this sequence must eventually reach
  $t'$: $t' = f(x_{i_m})$.  Then
  $s,x_{i_1},x_{i_2}, \ldots, x_{i_m}, t$ forms a path from $s$ to $t$
  in $G$.  This completes the proof.
\end{proof}

The proposition implies that optimal matching is not expressible in
SQL without the use of recursion.  Optimal matching can be solved in
\PTIME using, for example, the {\em Hungarian} algorithm, which, in
theory, could be expressed using recursion in SQL.  However, optimal
matching is rarely used in practice and, in fact, it is known that it
does not in general perform any better than the greedy NNM (discussed
next) in terms of reducing degree of covariate imbalance
\cite{Rosenbaum93}.   For that reason, we did not implement optimal
matching in our system.

1:1 NNMWR can be approximated with a simple greedy algorithm that
sorts all edges of the underlying graph in ascending order of weights
and iterates through this sorted list, marking edges as ``matched"
while maintaining the one-to-one invariant. This algorithm can return
a maximal matching that is at least $\frac{1}{2}$-optimal
\cite{Avis83}. Figure \ref{fig:nnmnr} adopts this greedy algorithm to
express $1:k$ NNMWR in SQL.  This algorithm is very similar to that of
NNMWR in Figure \ref{fig:nnmwr}(b), with the main difference that in
the matching step it imposes the restriction that a control unit is
matched with a treated unit only if it is not not already matched with
another treated with a lower order.  This solution also has a
quadratic complexity.

{\bf Choosing the distance function} We briefly discuss now the choice
of the distance function $\delta$ in NNM (see Fig.~\ref{fig:metrics}).
The propensity score distance is by far the most prominent metric in
NNM.  However, it has been the subject of some recent criticisms
\cite{king15}.  It has been shown that, unlike other matching methods,
in propensity score matching the imbalance reduction is only
guaranteed across the spectrum of all samples. In observational
settings, we typically have only one sample, so other matching methods
dominate propensity score matching \cite{king15}. An alternative is to use the mahalanobis distance.  This has been
shown to exhibit some odd behavior when covariates are not normally
distributed, when there are relatively large number of covariates, or
there are dichotomous covariates
\cite{rosenbaum2002observational}. Therefore, this method has a
limited practical applicability.

We should mention that there is huge literature in the database
community on finding the nearest neighbor. In fact this type of
queries are subject of an active research and development efforts in
the context of spatial-databases (see, e.g.,
\cite{obe2015postgis}). Our work is different from these efforts in that: 1) much of the work in this area has focused on finding sub-linear algorithm for identifying nearest neighbors of a single data item (e.g., by using spatial-index). In contrast, in our setting
we need to find all nearest neighbors, which is by necessity quadratic; 2) these works resulted in specialized algorithm,
implemented in general purposed languages. In contrast, we focus on finding a representation in SQL, in order to integrate causal
 analysis with other data analytic operations.  \ignore{ Recent developments in this context are
applicable to our problem.  However, due to the enumerated
shortcomings of NNM based on Mahalanobis and propensity score
distance, this paper focuses on other matching methods.  \dan{which
  other matching methods?  also, if we don't use the current matching
  methods then why do we mention them?  and if research on NN is
  relevant, then what is novel here?}}

\begin{figure}
\begin{alltt} 
CREATE VIEW \(\sbc\) AS
(WITH tmp0 AS
  (SELECT *. ntile(\(n\)) over w subclass,
   FROM \(\rel\) window w AS (ORDER BY ps))
SELECT ID, T, \(\ccv\), Y, subclass,
             max(T) over w maxT, 
             min(T) over w minT
FROM tmp0  window w AS (PARTITION BY BY subclass)
WHERE maxT!=minT)
\end{alltt}
\vspace{-0.3cm}
  \caption{\bf{SQL implementation of subclassification based on the
      propensity score.}}\label{fig:subpr}
\end{figure}

\vspace{-.2cm}

\subsection{Subclassification}
\label{sec:sub}
It is easy to see that NNM does not necessarily use all the data,
meaning that many control units despite being in the range of a
treatment unit are discarded.  In subclassification, the aim is to
form subclasses for which, the distribution of covariates for the
treated and control groups are as similar as possible. The use of
subclassification for matching can be traced back to
\cite{cochran1968effectiveness}, which examined this method on a
single covariate (age), investigating the relationship between lung
cancer and smoking. It is shown that using just five subclasses based
on univariate continues covariates or propensity score removes over
$90\%$ of covariates imbalance \cite{cochran1968effectiveness,rosenbaum1984reducing}.

{\bf Subclassification based on the propensity score} We describe the
SQL implementation of subclassification based on $n$ quintiles of the
propensity score in Figure \ref{fig:subpr}.  We assumed that $\rel$
includes another attribute $ps$ for the propensity score of each unit;
the value of $ps$ needs to be learned from the data, using logistic
regression \cite{Rubin1983b}.  The SQL query seeks to partition
the units into five subclasses with propensity scores as equal as
possible using the window function {\verb|ntile|} and ensure the overlap within each subclass. \ignore{. Finally,
subclasses for which the maximum and minimum of the treatment are not
equal retained (the discarded subclasses do not enjoy the overlap
assumption).} This solution has the order of $nlog(n)$ if the window
function computed using a sort algorithm.

%%%%  we haven't defined univariates; i don't know what they are
% Subclassification based on univariate continues covariates is a
% particular case of coarsen exact matching, which is discussed in the
% next section.

\begin{figure}
\begin{alltt} 
CREATE VIEW \(\crele\) AS
(SELECT *, (CASE WHEN \(x\sb{1}<c\sb{1}\)  THEN \(1 \ldots\)
                 WHEN \(x\sb{1}>c\sb{(k\sb{1}-1)}\) THEN \(k\sb{1}\)) AS \(cx\sb{1}\),
              . . .
           (CASE WHEN \(x\sb{n}<c\sb{n}\)  THEN \(1 \ldots\)
                 WHEN \(x\sb{n}>c\sb{(k\sb{n}-1)}\) THEN \(k\sb{n}\)) AS \(cx\sb{n}\)
from \(\rel\))
\end{alltt} \vspace{-.2cm} \hspace{1.7cm}
{\bf{(a) Coarsening wrt. a set of prespecified cutpoints}}
\vspace{-.3cm}
\begin{alltt} 
CREATE VIEW \(\cem\) AS
SELECT ID, T, \(\ccv\), Y, subclass
FROM
  (SELECT *,
          max(ID) OVER w AS subclass, 
          max(T) OVER w AS minT,
          min(T) OVER w AS maxT
   FROM \(\crele\)
   WINDOW w (PARTITION BY \(\ccv\)))
WHERE mint!=maxt
\end{alltt}\vspace{-.2cm} \hspace{3cm}
\bf{(b) Window function based}
\vspace{-.3cm}
\begin{alltt} 
CREATE VIEW \(\cem\) AS
WITH subclasses AS
  (SELECT *,
          max(ID) OVER w subclass, 
          max(T) OVER w AS minT,
          min(T) OVER w AS maxT
   FROM \(\crele\)
   Group by \(\ccv\))
SELECT ID, T, \(\ccv\), Y, subclass
FROM subclasses,\(\crele\)
WHERE subclasses.\(\ccv\)=\(\crele\).\(\ccv\)  AND  minT!=maxT

\end{alltt}\vspace{-.2cm}\hspace{3.5cm}
 \bf{(c) Group-by based}
\vspace{-.1cm}
  \caption{\bf{SQL implementation of CEM.}}\label{fig:cem}
\end{figure}

% \subsection{Coarsen Exact Matching (CEM)}
% \label{sec:cem}

{\bf Coarsening Exact Matching (CEM)} This method as proposed
recently in \cite{IacKinPor09},  is a particular form of subclassification in which the vector of covariates $\cv$ is
coarsened according to a set of user-defined cutpoints or any
automatic discretization algorithm.  Then all units with similar
coarsened covariates values are placed in unique subclasses. All
subclasses with at least one treated and one control unit are retained
and the rest of units are discarded.  Intuitively, this is a group-by
operation, followed by eliminating all groups that have no treated, or
no control unit. \ignore{Subclassification based
  on univariate continues covariate (cf Section \ref{sec:sub}) can bee
  seen as a particular form of CEM.}

For each attribute $x_i \in \cv$, we assume a set of cutpoints
$c_i=\{c_{1} \ldots c_{(k_i-1)}\}$ is given, which can be used to
coarsen $x_i$ into $k_i$ buckets. The view $\crele$, shown in Figure
\ref{fig:cem}(a), defines extra attributes
$\ccv=\{\ccvi_1 \ldots \ccvi_n\}$, where $\ccvi_i$ is the coarsened
version of $x_i$. Two alternative SQL implementations of CEM are
represented in Figure \ref{fig:cem}(b) and (c).  The central idea in
both implementations is to partition the units based on the coarsened
covariates and discard those partitions that do not enjoy the overlap assumption. \ignore{Then, each partition in which the maximum and minimum of
the unit treatments are equal are discarded. This is because they do
not enjoy the overlap assumption.} Note that the maximum of unit
$ID$s in each partition is computed and used as its unique
identifier. The window function based solution has the order of
$nlog(n)$, if the window aggregate computed using a sort
algorithm. The group-by based solution can becomes linear if the join
is performed by a hash-join.

Several benefits of CEM has been proposed in \cite{IacKinPor09}. For
instance, unlike other approaches, the degree of imbalance is bounded
by the user (through choosing proper cut-points for covariates
coarsening), therefore the laborious process of matching and checking
for balance is no longer needed.  More importantly, this approach
meets the {\em congruous principle}, which assert that there should be
a congruity between analysis space and data space. Notice that
Mahalanobis distance and propensity score, project a vector from
multidimensional space into a scalar value. It has been argued that
methods violating the congruous principle may lead to less robust
inference with sub-optimal and highly counterintuitive properties
\cite{IacKinPor09}. Therefore, the reminder of the paper focuses on
developing optimization techniques to speed up the computation of
CEM.

\vspace{-2mm} 

\section{Optimization Techniques}
\label{sec:OptimizationTechniques}

\ignore{
This section introduces
  several optimization techniques that significantly speedup  CEM, both in the online and offline setting.
\ignore{ Throughout this section, we assume a linear cost model for  CEM. That is, we
 assume that the cost of computing an extension of $\cem$ (cf. Figure \ref{fig:cem}(b)) is proportional to the number of  input rows. Our experimental results (Section \ref{sec:sct})  justify this assumption.}
}

\subsection{CEM on Base Relations}
\label{sec:baserel}

All toolkits developed for causal inference
assume that the input is a single table. However, in the real world, data is normalized, and stored in multiple tables
connected by key/forgien-keys. Thus, an analyst typically integrates tables
to construct a single table that contains all intergradients needed to conduct causal analyses. For instance, in  the \delay \ example, the treatment and part of the covariates
are stored in the weather dataset; the outcome and  rest of the covariates are stored in the
flight dataset. The fact that  data is scattered across multiple tables raises the question of whether we
can  push the matching methods to normalized databases. If so, we must question  whether we can take advantage of this property
to optimize the cost of  performing matching.

Integrating tables is inevitable for propensity score matching. For example, suppose we have two tables
$R(T,x,y,z)$ and $S(x,u,v)$. To estimate the propensity scores of each unit in $R \bowtie S$, we may fit
a logistic-regression between $T$ and  the covariates $y,z,u,$ and $v$. This may require computing the expression $w_1*y + w_2*z + w_3*u + w_4*v$ and then applying the logit function to it.  While the weights of the expression may be learned without joining the tables, using techniques such as \cite{schleich2016learning}, the integrated table is required to impute the leaned model with the covariate values of each unit to estimate its propensity score. In contrast,  CEM can be pushed
to the normalized databases. For example, $\cem(R \bowtie S)$ is equivalent to $\cem(\cem(R)\bowtie S)$. To see this, note that all subclasses discarded by performing CEM on $R$ do not satisfy the overlap assumption. It is clear that joining these subclasses with $S$, forms new subclasses that still fail to satisfy the overlap assumption and must be discarded. In the following, we formally state this property.

Let $D$ be a standard relational schema with $k$ relations
$\rel_1 \ldots \rel_k$, for some constant $k \geq 1$. The relation $\rel_i$
has the following attributes:  a  primary-key $ID_i $;  a foreign-key $FID_i $;   a vector of observed attributes
 $A_i$.  Without loss of generality, assume the treatment variable, $T$, is in relation $\rel_1$.  Let
  $\ccv_{\rel_i} \subseteq {A}_i$ be a vector of coarsened covariates from the relation $\rel_i$ that is associated with $T$.
    Further, assume relations are joined in the increasing order of indices.

\begin{proposition} \label{pro:push}
Given an instance of $D$, it holds that: $\cem(\rel_1  \bowtie
\ldots  \bowtie\rel_{k})= \cem( \ldots \cem(\cem(R_1) \bowtie   R_2) \ldots  \bowtie R_{k}) $.
\end{proposition}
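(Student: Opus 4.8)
The plan is to prove the identity by induction on the number of relations $k$, after isolating the essential content as a single one-step \emph{push} lemma. Throughout I treat $\cem$ operationally, exactly as in Figure~\ref{fig:cem}: on a relation $Q$ that carries the treatment attribute $T$ together with a designated set of coarsened covariates $\ccv_Q$, $\cem(Q)$ partitions $Q$ into groups sharing the same value of $\ccv_Q$ and retains precisely those tuples whose group contains at least one tuple with $T=1$ and at least one with $T=0$ (the overlap condition), discarding all others. The key structural fact I will exploit is that $T$ occurs only in $R_1$, so in every subexpression appearing in the claim the treatment lives in the left argument of the outermost join, and the $T$-value of a join tuple is inherited unchanged from its $R_1$-component.

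The one-step lemma I would prove is: if $Q$ contains $T$ with coarsened covariates $\ccv_Q$, and $R$ does not contain $T$ and has coarsened covariates $\ccv_R$, and $Q,R$ are joined on key/foreign-key attributes, then $\cem(Q \bowtie R) = \cem(\cem(Q) \bowtie R)$, where both outer applications group by $\ccv_Q \cup \ccv_R$ and the inner one groups by $\ccv_Q$. Since $\cem(Q) \subseteq Q$ gives $\cem(Q) \bowtie R \subseteq Q \bowtie R$, and a join tuple survives the restriction iff its $Q$-component survives, both sides are subsets of $Q \bowtie R$, so it suffices to match their tuples. For a tuple $t=(q,r)$ in a group $(a,b)$ of $Q \bowtie R$ that survives on the left there exist treated and control witnesses $(q_1,r_1),(q_0,r_0)$ in that same group; because $T$ is read off the $Q$-component, $q_1,q_0$ form a treated/control pair in the $\ccv_Q$-group $a$ of $Q$, so $q,q_1,q_0 \in \cem(Q)$ and the witnesses reappear in $\cem(Q)\bowtie R$, placing $t$ in the right-hand side. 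The converse inclusion is immediate, since witnesses surviving in $\cem(Q)\bowtie R \subseteq Q \bowtie R$ are a fortiori witnesses in $Q\bowtie R$.

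With the lemma in hand, the proposition follows by induction on $k$, where I abbreviate the nested right-hand side by $S_1=\cem(R_1)$ and $S_i=\cem(S_{i-1}\bowtie R_i)$. The base case $k=1$ is trivial. For the inductive step I would use associativity of join to write $\cem(R_1\bowtie\cdots\bowtie R_k)=\cem\bigl((R_1\bowtie\cdots\bowtie R_{k-1})\bowtie R_k\bigr)$, apply the lemma with $Q=R_1\bowtie\cdots\bowtie R_{k-1}$ (which contains $T$, with $\ccv_Q=\ccv_{R_1}\cup\cdots\cup\ccv_{R_{k-1}}$) and $R=R_k$, and then invoke the induction hypothesis $\cem(R_1\bowtie\cdots\bowtie R_{k-1})=S_{k-1}$ to obtain $\cem(S_{k-1}\bowtie R_k)=S_k$. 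The main obstacle, and the point requiring care, is the bookkeeping of the coarsened-covariate sets as they accumulate along the chain: one must verify that the grouping used by the inner $\cem$ at each step is exactly the covariate set of the left relation at that step. This goes hand in hand with the reliance on $T\in R_1$, since it is precisely the fact that the treatment never enters a later-joined table that lets treated/control witnesses of a fine group project down to witnesses of the coarser group, and the identity would fail if some $R_i$ with $i\ge 2$ carried $T$.
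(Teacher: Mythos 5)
Your proof is correct and takes essentially the same route as the paper, which justifies the identity only with a brief informal argument: subclasses of $R_1$ that fail the overlap condition, once joined with later relations, form join subclasses that still fail overlap and would be discarded anyway by the outer $\cem$. Your one-step push lemma is precisely the witness-based (contrapositive) formalization of that observation, and the induction on $k$ with the covariate-set bookkeeping simply completes the sketch the paper leaves implicit.
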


Proposition \ref{pro:push} shows that CEM can be pushed to normalized databases.  In the worst case, the cost of pushing CEM can be $k-2$ times higher than performing CEM
on the integrated table. This happens when relations have a one-to-one  relationship  and
CEM retains all the input data. However, in practice the relations typically have a  many-to-one
relationship. Moreover, the size of the matched subset is much smaller than the input database. In the \delay \ example,
each row in the weather dataset is associated with many rows in the flight dataset. In addition, as we see
in Section \ref{sec:endtoend}, the size of the matched data is much smaller than the input data. In such settings,
pushing CEM down to the base relations can significantly reduce its cost.
\ignore{This significantly reduces the cost of matching if
tables have many-to-one relationships and CEM prunes considerable portion of data. In Section \ref{sec:opt}, we show the efficacy of applying this techniques on the \delay\ example.
\ignore{In the \dela \ example, each row in the weather dataset is associated with many rows in the flight dataset. In addition, as we see
in Section \ref{sec:endtoend}, the size of the matched data is much smaller than the input data.}}

\ignore{
In the worst case, the cost of pushing CEM can be $k-2$ times higher than performing CEM
on the integrated table. This happens when relations have a one-to-one  relationship  and
CEM retains all the input data. However, in practice the relations typically have a  many-to-one
relationship. Moreover, the size of the matched subset is much smaller than the input database. In the \delay \ example,
each row in the weather dataset is associated with many rows in the flight dataset. In addition, as we see
in Section \ref{sec:endtoend}, the size of the matched data is much smaller than the input data. In such settings,
pushing CEM down to the base relations can significantly reduce its cost. \ignore{
A simple  strategy would be to start with performing on the relation $\rel_1$that contains the treatment and join the result with a relation with the minimum size that can joined with, until all relations are processed.} \ignore{Section  \ref{sec:endtoend}  shows that this strategy significantly
reduces the cost of matching in the \delay \ example}
}

\vspace{-.1cm}
\subsection{Multiple Treatment Effect}

\label{sec:mte}

Matching methods are typically developed for estimating the causal effect of a single treatment
on an outcome. However, in practice one needs
to explore and quantify the causal effect of multiple treatments.
For instance, in the \delay \ example, the objective is to quantify and compare the causal effect
of different weather types on flight departure delays.

This section introduces online and offline techniques to speed up the computation of  CEM
for multiple treatments.  In the sequel, we consider the relational schema consists of a single relation $\rele(ID,\tre,\ccv,Y)$ (extends that of $\crele$ (cf. Section \ref{sec:sub}) to account for multiple treatments), where $\tre=T_1, \ldots, T_k$ is a vector of $k$ binary
treatments, each of which has a vector of coarsened covariates $\ccv_{T_i}$, with $\ccv= \bigcup_{i=1 \ldots k} \ccv_{T_i}$.
Now the view $\relei(ID,T_i,\ccv_{T_i},Y)$ over $\rele$ has the same schema
as $\crele$ (cf. Section \ref{sec:sub}). Therefore, the view $\cem$ (cf. Figure \ref{fig:cem}) is well-defined over  $\relei$.

\subsubsection{(online) Covariate Factoring}

A key observation for reducing the overall cost of performing CEM for multiple treatments is that
many covariates are shared between different treatments. For instance, flights carrier, origin airport, traffic and many weather attributes are shared between the treatments Thunder and LowVisibility. The central idea in {\em covariate factoring}  is to pre-process the input data
wrt.  the shared covariates between treatments and uses the result to perform CEM
for each individual treatment. This  significantly reduces the overall cost of CEM for all treatments,
if covariate factoring prunes a considerable portion of the input data.

\begin{figure}
\begin{alltt}
CREATE VIEW \(\prp\) AS
WITH tmp0 AS
  (SELECT *,
          max(ID) OVER w AS supersubclass,
          max(\(T\sb1\)) OVER w AS maxT\(\sb1\),..., maxT(\(T\sb{k'}\)) OVER w AS maxT\(\sb{k'}\),
          min(\(T\sb1\)) OVER w AS minT\(\sb1\),..., minT(\(T\sb{k'}\)) OVER w AS minT\(\sb{k'}\)
   FROM \(\rele\)
   WINDOW w (PARTITION BY \(\ccvin\)))
SELECT ID, \(\ccvu\), Y , supersubclass
FROM tmp0
WHERE max(\(T\sb1\))!=max(\(T\sb{1}\)) or ... or  max(\(T\sb{k'}\))!=max(\(T\sb{k'}\))
\end{alltt}  \hspace{4.5cm}
(a)\bf{Covariates factoring.}
\vspace{-.1cm}
\begin{alltt}
CREATE VIEW \(\mcem\) AS
WITH tmp0 AS
  (SELECT *,
          max(ID) OVER w    subclass,
          max(\(T\sb{i}\))  OVER w AS minT,
          max(\(T\sb{i}\))  OVER w AS maxT
   FROM \(\prpi\)
   WINDOW w (PARTITION BY supersubclass, \( \ccv\sb{T\sb{i}} \smallsetminus \ccvin  \)))
SELECT ID,\(T\sb{i}\),\(\ccv\sb{T\sb{i}}\), Y, subclass
FROM tmp0
WHERE minT!=maxT
\end{alltt}
 \hspace{5cm}
\bf{(b) Modified CEM}
\caption{\bf{CEM based on covariate factoring.}}\label{fig:cf}
\end{figure}

\ignore{
Let let $\ccvin$ and $\ccvu$ respectively
 denote the union and intersection of the covariates associated with each
 $t_i \in \trep$ i.e.,  $\ccvin= \bigcap \ccv_i$ and $\ccvu= \bigcup \ccv_i$. Furthermore,
 for each treatment $\trep \subseteq \tre$, define the view $\relt(ID,\ccvu,\ccvin,\trep,o)$ on $\rel$.
 Notice that in a special case where $\trep=\{t_i\}$, $\relei$ has the same schema as the view $\crele$ (cf. Section \ref{sec:cem}).

Lets define the view  $\prp$ on
 $\relt$ as follows:
}

Let $\trep \subseteq \tre$ be a subset of treatments
with $\ccvin= \bigcap_{T_i \in \trep} \ccv_{T_i} \not = \emptyset$. Without loss of generality assume $\trep=\{T_1 \ldots T_{k'}\}$.
 Consider the view $\prp$ over $\rele$ as shown in Figure \ref{fig:cf}(a). Essentially, $\prp$ describes CEM wrt. the disjunction
 of treatments in $\trep$ and the shared covariates between them.  \ignore{Intuitively, for any instance of $\rele$, and a
 subset of treatments $\trep \subseteq \tre$, computing the extension of $\prp$,
partitions the instance into
groups with similar covariates that are shared between the treatments in $\trep$ at least one $T_i \in \trep$. Next we show that, CEM wrt. each individual treatment in $\trep$ can be
obtained from $\prp$.} Figure \ref{fig:cf}(b) defines the view $\mcem$ over $\prp$ that describes a modified version of CEM for $T_i$,
based on the covariate factoring. \ignore{The following proposition shows CEM can be done by covariate factoring} \ignore{that CEM based on covariate factoring as
defined by $\mcem$ is equivalent to $\cem$ as described in Section \ref{sec:sub}.}

\vspace{-0.2cm}
\begin{proposition} \label{pro:prematching}
Given an  instance of $\rele$ and any subset of treatments $\trep\subseteq \tre$ with
$\bigcap_{T_i \in \trep} \ccv_{T_i} \not = \emptyset$, and for any $T_i \in \trep$, it holds that $\mcem(\rele)=\cem(\relei)$.
\end{proposition}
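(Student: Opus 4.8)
The plan is to prove the set equality $\mcem(\rele) = \cem(\relei)$ by showing that the two views produce exactly the same tuples, namely the same units grouped into the same retained subclasses. The core idea is that covariate factoring is just a two-step refinement of the single-step partition that $\cem(\relei)$ performs: $\prp$ first partitions by the shared covariates $\ccvin$, and then $\mcem$ refines each surviving \textbf{supersubclass} by the remaining covariates $\ccv_{T_i} \smallsetminus \ccvin$. Since partitioning by $\ccvin$ and then by $\ccv_{T_i} \smallsetminus \ccvin$ yields the same partition as partitioning by all of $\ccv_{T_i}$ at once, the grouping structure matches; the only thing that can differ between the two views is \emph{which} subclasses survive the overlap-pruning step, so the real content of the proof is showing that a unit survives in $\mcem(\rele)$ if and only if it survives in $\cem(\relei)$.

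First I would fix an arbitrary unit and an arbitrary value $b$ of the full coarsened covariate vector $\ccv_{T_i}$, and note that this $b$ decomposes uniquely as a pair $(b', b'')$ where $b'$ is the restriction to $\ccvin$ and $b''$ is the restriction to $\ccv_{T_i}\smallsetminus\ccvin$. In $\cem(\relei)$ the relevant group is all units with $\ccv_{T_i}=b$, and it is retained exactly when it contains both a treated ($T_i=1$) and a control ($T_i=0$) unit, i.e. $\min T_i \neq \max T_i$. In $\mcem$ the group is all units agreeing on the \textbf{supersubclass} (i.e. on $\ccvin=b'$) \emph{and} on $\ccv_{T_i}\smallsetminus\ccvin=b''$, which is precisely the same set of units as $\ccv_{T_i}=b$; again it is retained exactly when $\min T_i \neq \max T_i$ over that set. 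So the final pruning conditions are literally computed over the same bags of tuples. The forward direction I would need to handle carefully is the soundness of the intermediate pruning in $\prp$: I must verify that no unit surviving in $\cem(\relei)$ is prematurely discarded by $\prp$. This follows because the disjunctive overlap condition in $\prp$ (Figure~\ref{fig:cf}(a)) retains a $\ccvin$-group whenever \emph{some} treatment in $\trep$ has both values present, so if the finer $\ccv_{T_i}$-group for $T_i$ already has overlap then its coarser $\ccvin$-supergroup certainly does, and the supersubclass is preserved and available for refinement by $\mcem$.

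The converse direction confirms no spurious tuples are added: any unit output by $\mcem$ lies in a supersubclass that passed $\prp$ and in a refinement group with $T_i$-overlap, and that refinement group is exactly a full $\cem(\relei)$ group, so the unit is output by $\cem(\relei)$ as well. The one genuine subtlety, and the step I expect to be the main obstacle, is the identifier/bookkeeping issue rather than the logical equivalence: both views assign subclass identifiers via \texttt{max(ID)} over the partition, so I must check that the $\cem(\relei)$ identifier ($\max ID$ over the $\ccv_{T_i}=b$ group) equals the $\mcem$ identifier ($\max ID$ over the (supersubclass, $b''$) group), which holds precisely because those two groups are the same set of units and hence have the same maximal $ID$. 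Once this matching of subclass labels is established alongside the equality of retained unit sets, the two view extensions coincide tuple-for-tuple, completing the proof. I would keep the argument at the level of partitions and overlap conditions rather than unrolling the SQL, since the window-function semantics reduce cleanly to ``group, then test $\min T_i \neq \max T_i$.''
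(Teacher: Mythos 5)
Your proof is correct and takes essentially the same approach as the paper's: the key fact in both arguments is that grouping by all of $\ccv_{T_i}$ refines the grouping by the shared covariates $\ccvin$, so overlap failure (and hence the pruning done by $\prp$) propagates safely between the coarse and fine partitions. The paper only sketches this for $\trep=\{T_1,T_2\}$ and argues one direction (discarded supersubclasses would have been discarded by $\cem(\relei)$ anyway), whereas you additionally verify the converse inclusion and the agreement of the $\max(ID)$ subclass identifiers; these are details the paper leaves implicit rather than a genuinely different argument.
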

\vspace{-0.1cm}

\begin{proof}
  We sketch the proof for $S=\{T_1, T_2\}$. Covariate factoring on $S$, discard subclasses obtained from group-by
  on $\ccv_{T_1} \cap \ccv_{T_2}$ that have no overlap wrt. both $T_1$ and $T_2$. It is clear that group-by $\ccv_{T_1}$ is more fine-grained  than group-by on $\ccv_{T_1} \cap \ccv_{T_2}$, thus, subclasses with no overlap in the latter, form new subclasses in the former that still fail the overlap wrt. both of $T_1$ and $T_2$.
\end{proof}
\ignore{
Proposition \ref{pro:prematching} shows that CEM for multiple treatments can be performed by covariate factoring.}
Proposition \ref{pro:prematching} shows that CEM for multiple treatments can be performed by covariate factoring.
Next we develop a heuristic algorithm that takes advantage of this property to speed up CEM.
Before we proceed, we  state two observations that lead us to the algorithm.

First,  the total cost of performing  CEM independently for $k$ treatments
is a function of the size of the input database. However, the cost of performing the same task using covariate factoring
is a function of the size of the result of covariate factoring. But, the cost of covariates factoring
depends on the size of the input. Thus,  partitioning the treatments into a few set of groups, which results in pruning a huge
portion of the input database, reduces the overall cost of CEM.

\ignore{
First, by assuming a linear cost model for CEM,  the total cost of performing CEM independently for $k$ treatments
is $f(k*n)$, where $f$ is a linear function and $n$ is the number of rows in the input database. However, the cost
of computing CEM based on covariate factoring wrt. $m$ group of treatments,
is $f(k*n')$, where $n'$ is the size of the result of the covariates factoring, plus the cost of covariate factoring itself. Note that, we assumed that
this pre-processing equally prunes the input data for all groups. Since covariate factoring is sort of CEM we can refer to its cost
with the same function.  Thus, the cost of covariate factoring for $n$ groups is $f(m*n)$. Consequently, covariates factoring reduces the overall cost of
CEM if $f(k*n'+m*n)<f(k*n)$. Intuitively,  partitioning the treatments into a few set of groups, which results in pruning a huge portion of the input data, reduces the overall cost of CEM.
}

Second, we observe  that the correlation between the treatments  plays a crucial role in the  efficacy  of covariate factoring. The
 correlation between two treatments $T$ and $T'$ can be measured by the {\em phi coefficient} between them, denoted by $\phi$. Consider the following table

\begin{center}

\begin{tabular}{|c|c|c|c|}
\hline
  % after \\: \hline or \cline{col1-col2} \cline{col3-col4} ...
   & $T=1$ & $T=0$ & Total  \\ \hline
  $T'=1$ & $n_{11}$ & $n_{10}$ & $n_{1\bullet}$ \\ \hline
  $T'=0$ & $n_{01}$ & $n_{00}$ & $n_{0\bullet}$ \\ \hline
  Total & $n_{\bullet1}$ & $n_{\bullet0}$ & $n$ \\
  \hline
\end{tabular}

\end{center}

\noindent  where $n_{11}$, $n_{10}$, $n_{01}$, $n_{00}$, are non-negative counts of number of observations that sum to $n$, the total number of observations.
 The phi coefficient between $T$ and $T'$ is given by
  $\phi(T,T') ={\frac  {n_{{11}}n_{{00}}-n_{{10}}n_{{01}}}{{\sqrt  {n_{{1\bullet }}n_{{0\bullet }}n_{{\bullet 0}}n_{{\bullet 1}}}}}}$. A phi coefficient of 0  indicates independence, while 1 and -1 indicates complete dependence between the variables (most observations falls off the diagonal).   Suppose $T_1$ and $T_2$ are highly correlated, i.e.,  $|\phi(T,T')|\simeq 1$ and share the covariates $X$. Further assume CEM wrt. $T$ and $X$, prunes 50\% of the input data.
Then, covariates factoring for $T_1$ and $T_2$ prunes almost 50\% of the input data. This is because, subclasses discarded by
CEM on $T$, are very likely to be discarded by CEM wrt. the disjunction of $T$ and $T'$ (because there are highly correlated). \ignore{
this variable is either 0 or 1, since $T'$ is highly correlated with $T$, it is very likely that most
of these subclasses are also discards
by CEM on $T'$. Therefore, covariates factoring should performed on correlated treatments.}

Algorithm \ref{al:cf1}, was developed based on these observations.  Section \ref{sec:opt} shows that covariates factoring based on this algorithm significantly reduces the over all cost of CEM wrt. multiple treatments in the \delay\ example.

\ignore{
their shared covariates
(not the entire set of covariate associated with them) similarly  retain $50\%$ of the input database.  The size of the result of
factorization is almost  $50\%$ of the input size the the correlation between $t_1$ and $t_2$ is almost 1;
on the other, if $t_1$ and $t_2$ are not correlated then pre-processing does not prune the data.

These observations lead us to the following algorithm based on the phi correlation matrix of the set of treatments.
}
\vspace{-0.2cm}\begin{algorithm}

\caption{Covariate Factoring}
\label{al:cf1}

1.\  Let $T_1 \ldots T_k$ be a set of treatments, and  $\ccv_{T_i}$ be a vector of covariates associated to $T_i$.   \newline
2.\ Construct a correlation matrix, $\mc{M}$, between the treatments such that
the $[i,j]$ entry in $\mc{M}$ contains $\phi(T_i,T_j)$.   \newline
3.\ Given a partition of the treatments into $n$ groups, $S_1 \ldots S_n$, such that $|S_k|\geq 2$ and
  $\bigcap_{T_i \in S_k} \ccv_{T_i} \not = \emptyset$, compute the normalized pairwise correlations in $s_k$
  as $\mc{C}_{S_k}= \frac{\sum_{(T_i,T_j) \in S_k} |\mc{M}[i,j]|}{|S_k|}$. \newline
5:\ Perform covariate factoring for groups obtained by the partition that maximises $\sum_{k=1\dots n} \mc{C}_{S_k}$.

\ignore{5:\ Find a partitioning that maximizes $\sum_{k=1\dots n} \mc{C}_{S_k}$ \ignore{, the sum of normalized pairwise correlations in all groups.} \newline
6:\ Perform the covariate factoring for each group of treatments obtained from this partitioning.}
\end{algorithm}

\subsubsection{(online)  Data-Cube Aggregation}
\label{sec:cube}

Given a set of observed attributes $\cv$ and an outcome, all attributes can be subjected to a causal
analysis. For example, all weather attributes can be dichotomized and form
a treatment. In addition, one may define other treatments, formed by conjunction of such dichotomized attributes. For instance, the conjunction of ``snow" and ``high-wind-speed" could become the ``snowstorm" treatment. Note that
causal effect is not subadditive \cite{janzing2013quantifying}. Thus, quantifying the causal effect of the conjunction of $T_1$ and $T_2$
requires an independent analysis on the treatment $T=T_1 \land T_2$ wrt. the covariates  $X_{T}= X_{T_1} \cup X_{T_2}$.

In principle, one might be interested
in exploring and quantifying the casual effect of $k=2^{|\cv|}$ treatments. In this setting, to estimate $\ate$ for all possible treatments, a matching method must be performed wrt. all possible subsets of $\cv$, each of which is associated to
one treatment. We argue that, in such cases, CEM for all treatments can be performed efficiently using the existing DBMS systems that support data-cube operations.

\ignore{
Recall from  \ref{fig:cem} that CEM is implemented with a group-by followed by a join (cf. Figure \ref{sec:cem}(c)).
More specifically, the rows (units) are grouped by the identified covariates. For each group three
aggregates namely $max(ID)$, $min(t)$
and $max(t)$ are computed. $max(ID)$ is used as a unique identifier for each subclass; the other
two are used to ensure that each group contains at least one treated and one control units.}

Recall that CEM for an individual treatment is a group-by operation (cf. Figure \ref{fig:cem}(b)). Thus, CEM for all treatments requires computing some aggregates on the data-cube ($\ccv$). \ignore{The idea is to modify the SQL
implementation of CEM (cf. Figure \ref{fig:cem}(b))  to perform the group-by statement using a relevant data-cube.} Now the established optimization techniques to
 compute data-cubes efficiently  can be adopted, e.g., for computing a group-by, we pick the
 smallest of the previously materialized groups from which it is possible to compute the group-by.
\ignore{For example, consider a four
covariates cube $(X_1,X_2,X_3,X_4)$.  Group by $X_1X_2$ can be obtained from $X_1X_2X_3X_4$,
$X_1X_2X_3$ and $X_1X_2X_4$. The idea is to pick the smallest between $X_1X_2X_3$, $X_1X_2X_4$. It is
apparent that both are smaller than $X_1X_2X_3X_4$.} In Section \ref{sec:opt}, we apply this idea to the \delay \ example and show that
it significantly reduces the overall cost of CEM for multiple treatments.

\vspace{-0.1cm}
\subsubsection{(offline) Databases Preparation for Causal Inference on Sub-populations}
\label{sec:dp}

So far, we considered causal inference as an online analysis which seeks to explore the effect of multiple treatments on an outcome, over a
population. In practice, however, one needs to explore and quantify the causal effect of
multiple treatments over various
sub-populations. For instance, \ignore{in the \delay \ example one needs to know:} what is the causal
 effect of low-visibility on departure delay in San Francisco International Airport (SFO)?  what is the effect of thunder at all airports in the state of Washington since 2010?
  such queries can be addressed by performing CEM on the entire data and selecting the relevant part of the obtained matched subset to the query.

Thus, the cost of performing CEM wrt. all possible treatments can be amortized over several causal queries. Therefore, we can prepare the database offline and pre-compute the matched subsets wrt. all
possible treatments to answer online causal queries efficiently. This could be impractical for high dimensional data since the number of possible treatments can be exponential in the number of attributes (cf. Section \ref{sec:cube}). Alternatively, we propose
Algorithm \ref{al:cf1}, which employs the covariate factoring and data-cube techniques to prepare the database so that CEM based on
any subset of the database can be obtained efficiently.

\ignore{
. Intuitively, it uses data-cubes to efficiently perform covariate factoring
for multiple treatments. For each group of treatments the result of covariate factoring is materialized and  proper data-cubes are constructed,
so that CEM for any individual treatment and over an arbitrary subset of a database can be efficiently obtained.  In Section \ref{sec:opt}, we apply this algorithm on the \delay \ example. We show that this algorithm
substantially reduces the cost of CEM and has a reasonable cost of preparation.}
\vspace{-0.2cm}
\begin{algorithm}
\caption{Database Preparation} \label{alg:dp}
1:\   Let $T_1 \ldots T_k$ be a set of treatments, and  $\ccv_{T_i}$ be a vector of covariates associated to $T_i$.   \newline
2:\ Apply  Algorithm \ref{al:cf1} to partition the treatments into $S_1 \ldots S_k$ with $\ccvu_{S_i}= \bigcup_{T_j \in S_i} \ccv_{T_j}$
and $\mc{X}'_{S_i}= \bigcap_{T_j \in S_i} \ccv_{T_j} $. \newline
3:\ Partially materialize $\mc{C}$, the cube  on $\ccv_1 \ldots \ccv_k$ to answer group-by queries for each $\mc{X}'_{S_i}$.\newline
4:\ For each group $S_i$, perform covariate factoring using $\mc{C}$ and materialize $P_{S_i}$. \newline
5:\ For each $P_{S_i}$, partially materialize $\mc{C}_i$, the cube on $\ccvu_{T_i}$, so that CEM for each $T \in g_i$ can be computed using $\mc{C}_i$.
\end{algorithm}

\begin{figure*}
  \centering
    \includegraphics[width=1.1\linewidth]{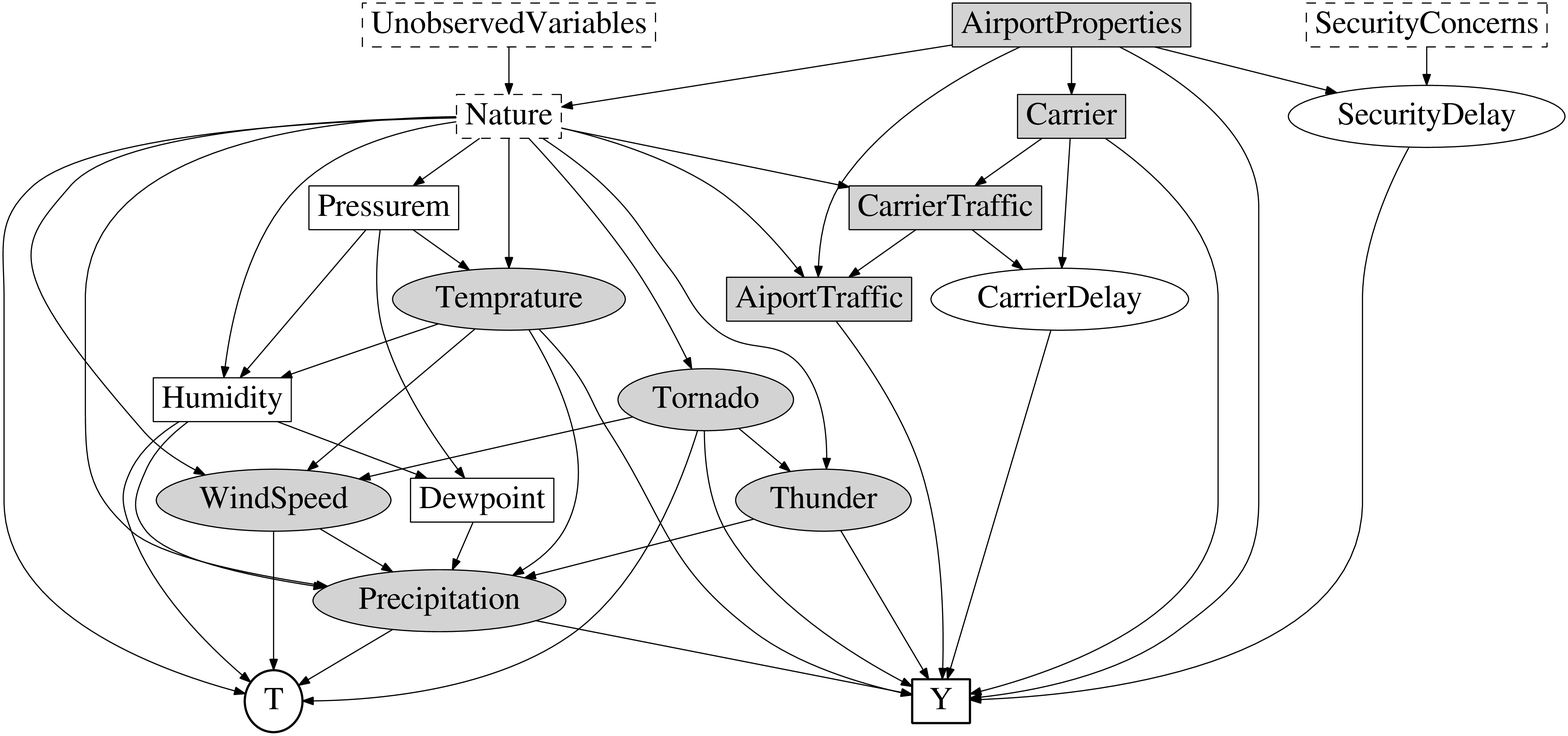}
\caption{ {\bf \small A Causal Directed Acyclic Graph (CDAG)  demonstrating
    the assumptions necessary to estimate $\ate$ of the treatment
    T(LowVisibility) to the outcome Y(DepDelay).}  A CDAG shows how conditioning on observable covariates (X)
breaks the confounding causal relationships (e.g., T  $\leftarrow$ Precipitation
$\rightarrow$ Y) and allows the estimation of $\ate$. In other words, we assume that the treatment assignment $T$
and potential outcomes $Y=Y(z)$ for z=0,1 are unconfounded
by conditioning on the filled nodes which is a minimum cardinality set of {\em observed} variables that $d$-sperate the two. Attributes in boxes are from the flight data; Those in ovals are from the weather data; Those in dashed boxed are unobserved.}
\label{fig:dag}
\end{figure*}

\section{Experimental Results}
\label{sec:exp}

We have implemented the basic techniques in
Sec.~\ref{sec:BasicTechniques} and the optimizations in
Sec.~\ref{sec:OptimizationTechniques} in a system called \GSQL.  This
section, presents experiments that evaluate the feasibility and
efficacy of \GSQL.  We addressed the following questions.  What is the
end-to-end performance of \GSQL \ for causal inference in
observational data?  Does \GSQL \ support advanced methods for causal
inference and produce results with the same quality as statistical
software? How does \GSQL\ scale up with increasingly large data sets?
And how effective are the optimization techniques in \GSQL?

\vspace{-.2cm}

\subsection{Setup}
\label{sec:setup}

% \subsubsection{Data}
% \label{sec:data}

{\bf Data} The {\em flight dataset} we used was collected by the US
Department of Transportation (U.S. DOT) \cite{flightdata}. It contains
records of more than 90\% of US domestic flights of major airlines
from 1988 to the present. Table \ref{tab:attlist}(shown in Section
\ref{sec:introduction}) lists dataset attributes that are relevant to
our experiments.  We restrict our analysis to about 105M data entry
collected between 2000 and 2015.

The {\em weather dataset} was gathered using the weather underground
API \cite{Weatherdata}.  Its attributes are also presented in Table
\ref{tab:attlist}. In addition, we pre-computed two other attributes
AiportTraffic and CarrierTraffic. The former is the total number of
flights occur in the origin airport of a flight one hour prior to the
flight departure time, the latter is the same quantity restricted to
the flights from the same carrier.  We managed to acquire and clean
35M weather observations between 2000 and 2015. These datasets are
integrated by a spatio-temporal join.

% \subsubsection{Causal questions and covariate selection}
% \label{sec:def}

{\bf Causal questions and covariate selection} We explore the causal
effect of the following binary treatments on flight departure delays
and cancellation: LowVisibility (1 if Visim$<1$; 0 if Visim$>5$); \
Snow (1 iff Precipm$>0.3$ and Tempm$<0$); \ WindSpeed (1 if
Wspdm$>40$; 0 if Wspdm$<20$); \ and Thunder.  In each case, data items
where the attribute in question was in between the two bounds were
discarded.  For instance, for the treatment LowVisibility, we want to
assess the following counterfactual: {\em ``What would the flight
  departure delay have been, if visibility were fine, i.e., Visim$>5$,
  when visibility is actually low, i.e., Visim$<1$"}; for this
analysis, items where Visim$\in[1,5]$ were discarded.

% For each analysis, all flights not in the treated or control groups are discarded (for Snow and Thunder, all units are either treated or control).

\ignore{
We used the RNCM to estimate the $\ate$ of each treatment on flight departure delay and cancellation via the matching methods. Each flight
form a unit in the RNCM.} To ensure the SUTVA (cf. Section \ref{subsec:causalitystatistics}),  we considered the difference between the
actual delay and the late aircraft delay (if one exists) as the
outcome of interest. Therefore, we assumed that the potential delay of each flight did not depend on the treatment assignment
to the other flights namely, there was no interference between the units. \ignore{ To avoid multiple version of a treatment, we only considered extreme weather conditions as treatment. For instance, visibility in the range less than $1 km$ is considered as
treatment, assuming that any value of visibility in this range would have
a similar effect on flight departure delay.}

To obtain quality answers for each treatment, we used graphical models
to identify a minimum number of covariate attributes to ensure
unconfoundedness, because minimizing the number of covariates has been
shown to increase the precision of the matching estimators
\cite{de2011covariate}.  We used the tool provided by
\url{http://dagitty.net} to construct the causal DAG (CDAG) and select
the minimum number of covariates that separate the treatment assignments from  the potential outcomes.  The tool finds a minimal subset of variables $X$ that forms a
{\em $d$-separation}~\cite{PearlBook2000} of the treatment $T$ from
the effect $Y$ (meaning: all paths from $T$ to $Y$ go through some
variable in $X$).  For example, Figure \ref{fig:dag} shows the CDAG
for the treatment LowVisibility and the effect DepDelay: the set $X$
consists of the shaded nodes d-sperate the treatment assignment and the potential outcomes.

\ignore{

 shown in
to represent the causal relationships between LowVisibility, other features, and the
departure delay. \ignore{\footnote{}} Informally, CDAG is a carrier of independent assumptions provided that two
random variables (represented as nodes of the graph) are assumed to be independent iff there is no edge
between them. Further conditional independence can be extracted from  a CDAG by applying
the {\em d-separation} criterion \cite{PearlBook2000}. Informally, the point
of $d$-separation is that, if all paths in the CDAG leading from one random variable (say, $T_i$) to another (say, $Y_i$) are
blocked by a certain subset of variables (say, $X_i$), then $T_i$ and $Y_i$ are conditionally
independent given $X_i$. \ignore{For instance for LowVisibility we obtain that following minimal set of observed covariates:
$X$=\{AiportTraffic, CarrierTraffic, Precipitation, Temperature, Thunder, Tornado, Carrier, Windspeed\}.}
For instance, for LowVisibility, the filled nodes in Figure \ref{fig:dag}, represent the set of covariates that unconfound LowVisibility and DepDelay. For more details on covariate selection using CDAGs, we refer the reader to \cite{de2011covariate} and references therein.

}

{\bf Systems} The experiments were performed locally on a 64-bit OS X machine with
Intel Corei7 processor (16 GB RAM, 2.8 GHz).  \GSQL\
was deployed to Postgres version 9.5.  We compared \GSQL\ with R packages MatchIt
and CEM, version 2.4 and 1.1 respectively,  available from \cite{CRAN}.

\subsection{Results}

\subsubsection{End-to-End Performance} \label{sec:endtoend}

We estimated the causal effect of different weather types on flight departure delay and cancellation at five major US airports, that are among the ten airports with the worse weather-related delays according to \cite{weather},  namely: San Francisco (SFO) , John F. Kennedy (JFK), Newark Liberty (EWR), George Bush (IAH), and LaGuardia Airport (LGA).  The
flight and weather data associated with these airports consists of about 10M and 2M rows, respectively.

To this end, we estimated $\ate$  for each treatment associated to a weather type by performing CEM wrt. its identified covariates.  Continuous covariates were coarsened into equal width buckets, and categorical covariates were
matched by their exact values.

Figure \ref{fig:eteresult}(a) reports the running time of performing
CEM for each treatment: recall (Fig.~\ref{fig:cem}) that this involves
a group-by followed by an elimination of all groups that have no
treated, or no control unit.
\ignore{
 In this figure, the number of
treated and matched units for each treatment is represented as
labels. As depicted, CEM produced matched samples with reasonable
sizes.} Next, we evaluated the quality of the CEM, in Figure
\ref{fig:eteresult}(b), using a standard metric in the literature: the
{\em absolute weighted mean difference (AWMD)} of each continuous
covariate between the treated and control group:

{
\begin{align}
& \E_b[|\E[x_i | T=1, B(X)=b]-\E[x_i | T=0, B(X)=b]|] \label{eq:awmd}
\end{align}
}\noindent for each covariate attribute $x_i \in X$.  This is
a standard measure of imbalance, since in a perfectly balanced group
the difference is 0 (see Eq. \ref{eq:bs}). Note that in the case of CEM, we assume subclasses form the balancing scores.  We compared this
difference for the original value of the covarites  before and after CEM.  As shown, CEM substantially reduced
the covariates imbalance in the raw data: this graph shows the perils
of attempting to do causal inference naively, without performing CEM
or matching.  We also observed that CEM results in quite reasonable
matched sample size wrt. treatments ( more than 75\% of the treated
units are matched with the average rate of one treated to five control
units).

 \ignore{ we performed CEM in
  which the continuous covariates are simply coarsened by
  rounding. Figure \ref{fig:eteresult}(b,d) since data is pretty dense
  this simple coarsening strategy produced well-balanced matches with
  reasonable sizes. Figure \ref{fig:eteresult}(b,d) represents the
  estimation of $\ate$ for each treatments before and after CEM.}

\ignore{
. That is we measure the mean
difference of a covariate between the treated and control group within
each subclass and take the average of this quantity across all
subclasses.} \ignore{It is trivial that the matched sample is perfectly
  balanced on the categorical covariates such as Carrier.}

Next, Figure \ref{fig:eteresult}(c) shows the causal effect of weather
type on departure delay: it shows the estimated $\ate$ for each target
airport, normalized by the number of treated units to the total number
of units in each aiport.  Following common practice in causal
inference we estimated $\ate$ using Eq. \ref{eq:oatt} (thus assuming
that CEM produces perfectly balanced groups). The normalization lets
us compare the effects of different weather types in an individual
airport. In fact, $\ate$ reflects the potential causal effect of a
treatment to an outcome. For instance, IAH barely experiences snow,
but snow has a devastating effect on flight delay at this airport. The
results in Figure \ref{fig:eteresult}(c) are in line with those in
\cite{weather}, which reported the following major weather-related
causes of flight delay at the airports under the study:
snowstorms,thunderstorm and wind at EWR; thunderstorm and fog at IAH;
snowstorms and visibility at JFK; snowstorms at LGA; fog and low
clouds at SFO;

Figure \ref{fig:eteresult}(d) a similar causal effect of weather type
on a different outcome, namely Cancellation. Here, we leveraged the
fact that matching and subclassification do not depend on a particular
outcome variable, thus, a matched sample wrt. a treatment can be used
to estimate its causal effect of several outcomes.

\ignore{The credibility of the findings can be assessed by comparing the obtained $\ate$ for each treatment with the average difference
of weather delay and NAS delay between the treated and control group, we call
 this quantity {\em actual delay bound (ADB)}, as recorded in the flight dataset.
 ADB provides an upper bound for the actual causal effect of a weather type to departure delay.
As depicted in Figure \ref{fig:eteresult}(a), the estimated $\ate$ are always less than the ADB. In a similar way we measured
{\em actual cancellation bound (ACB)}, and compare it to the estimated effect of weather types on cancellation.}

\begin{figure*}
\hspace*{-0.7cm}\begin{subfigure}{0.62\linewidth}
\centering
\includegraphics[height=5.5cm,width=\linewidth]{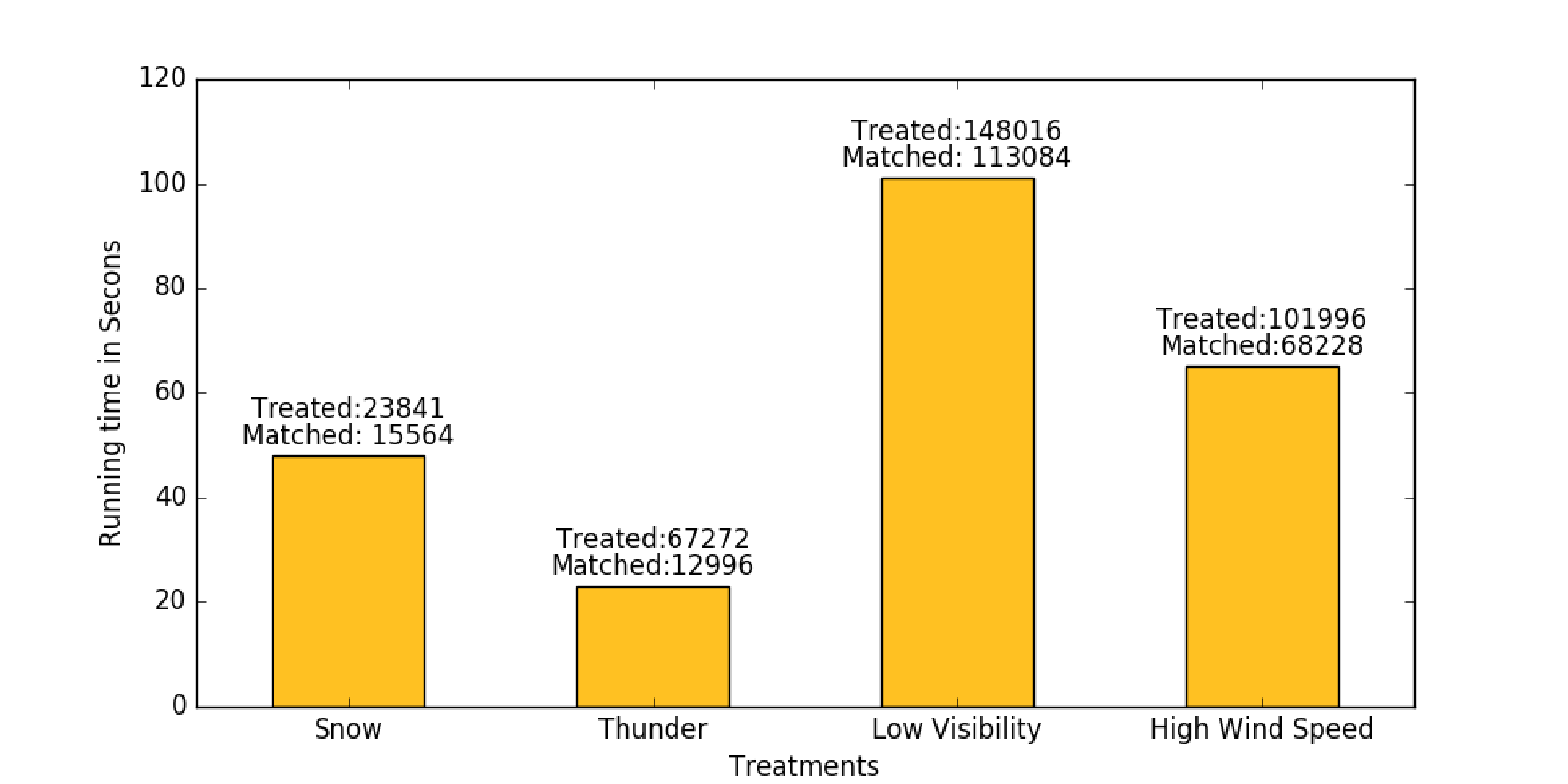}
\caption{{Running Time of Matching.}}
\label{sfig:testa}
\end{subfigure}\hfill
\hspace*{-0.7cm}\begin{subfigure}{0.62\linewidth}
\centering
\includegraphics[height=5.5cm,width=\linewidth]{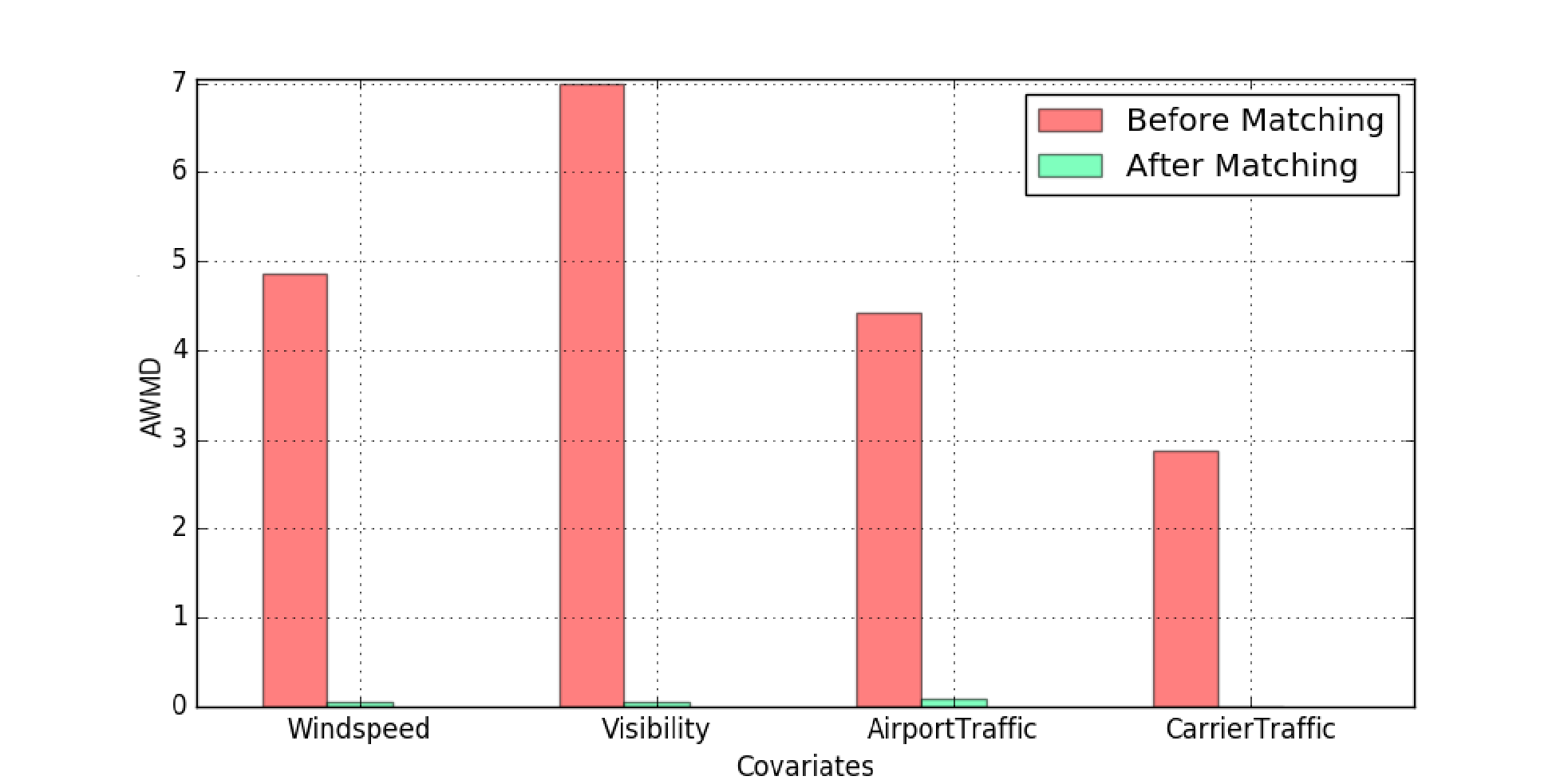}
\caption{{Measuring Imbalance Reduction.}}
\label{sfig:testb}
\end{subfigure}\hfill

\hspace*{-0.7cm} \begin{subfigure}{0.62\linewidth}
\centering
\includegraphics[height=5.5cm,width=\linewidth]{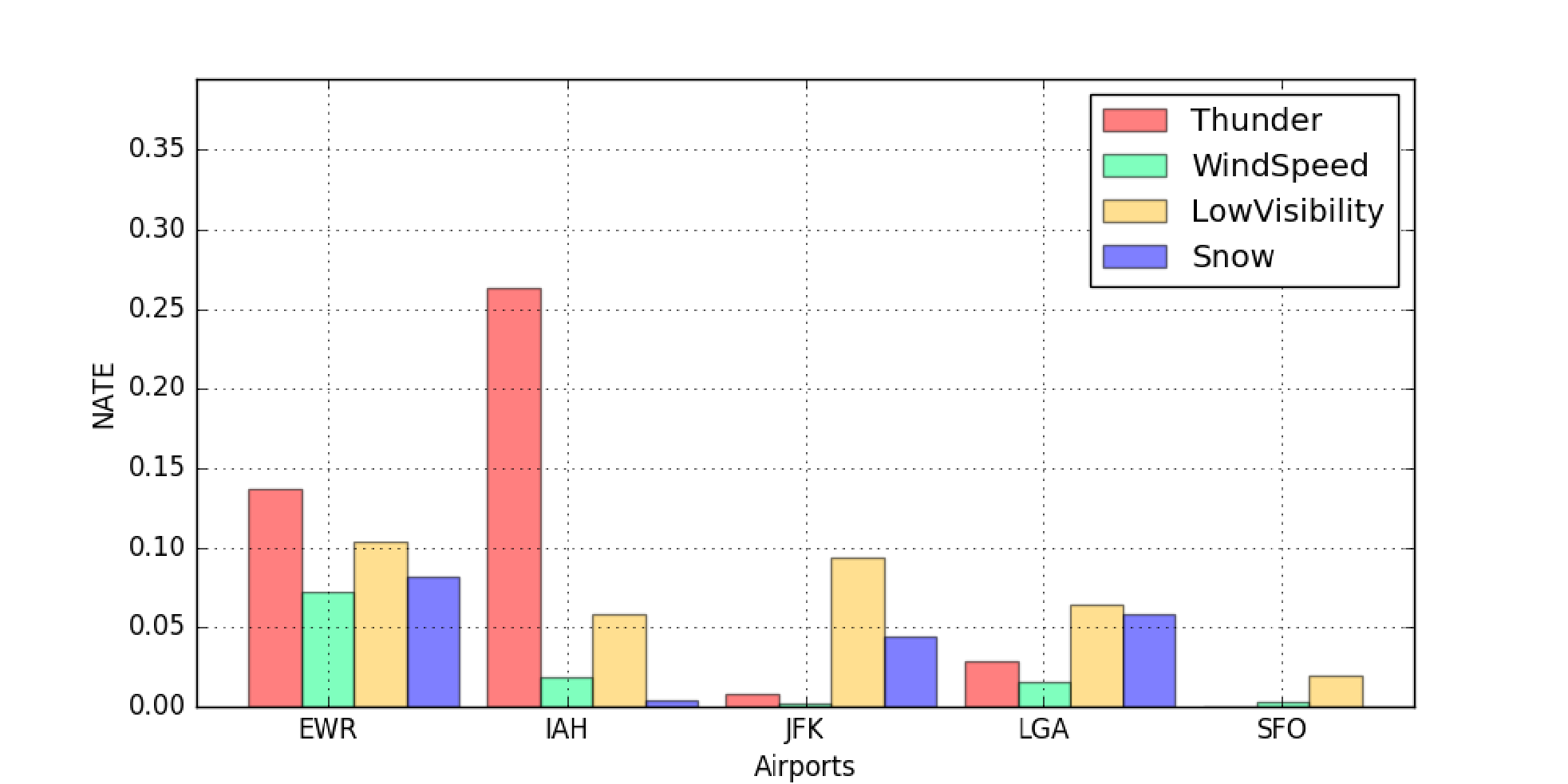}
\caption{{Causal Effect of Weather Types on  Delay.}}
\label{sfig:testc}
\end{subfigure}\hfill
\hspace*{-0.7cm}\begin{subfigure}{0.62\linewidth}
\centering
\includegraphics[height=5.5cm,width=\linewidth]{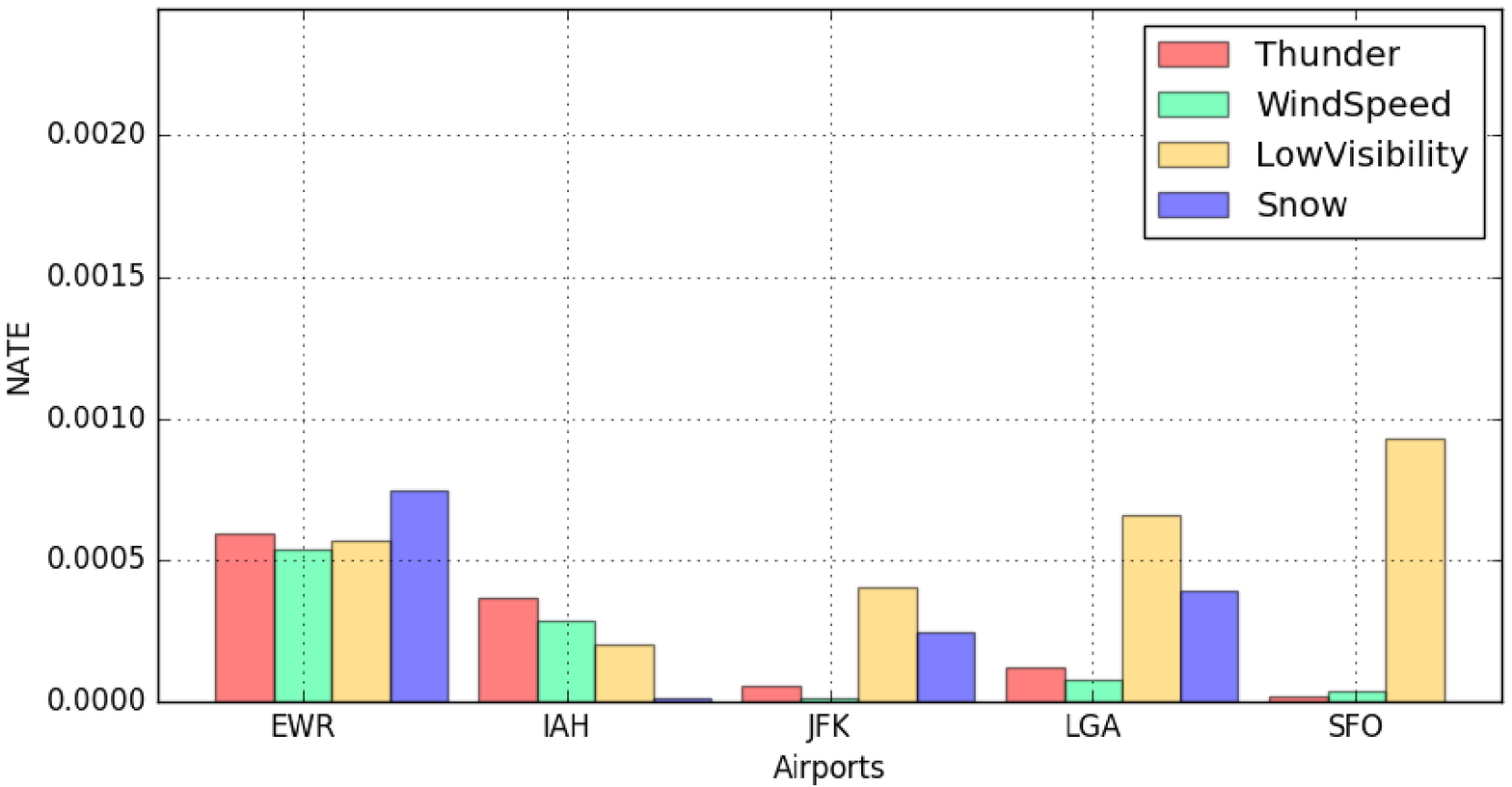}
\caption{{Causal Effect of Weather Types  on  Cancellation.}}
\label{sfig:testd}
\end{subfigure}\hfill

\caption{ \bf{Analysis of the causal effect of weather on flight delay and cancellation.}}
\label{fig:eteresult}
\end{figure*}

\vspace{-.3cm}
\subsubsection{Quality Comparisons with R}

% We compared the quality of the matching methods provided in \GSQL  \ with their
% counterpart methods in R. \ignore{More specifically, we compare exact matching,
% CEM, 1-1 NNMNR , 1-1 NNMWR and Subclassification
% based on propensity score with the similar methods
% provided in the MatchIt and CEM packages in R.}

The MatchIt and CEM packages in R are popular tools for conducting
causal inference today.  In this section we compare the quality of the
results returned by \GSQL\ with those returned by the R packages.  We
considered all types of matchings described in
Sec.~\ref{sec:BasicTechniques}: NN matching (both NNMNR and NNMWR),
and subclassification (by propensity score, CEM and exact matching(EM)).  Sine the R
packages do not scale to large datasets, we conducted these
experiments over a random sample of data used in Section
\ref{sec:endtoend}, which consists of around 210k rows. We evaluated
different matching methods for the treatment of Snow.

Table \ref{tbl:qc} compares the size of the matched sample and the
AWMD (Eq.\ref{eq:awmd}) obtained by \GSQL \ and R, for different
matching methods.  These results shows that \GSQL\ produced results
whose quality was at least as good as R.  Note that for NNMNR and
NNMWR, the caliper 0.001 is applied.   For subclassification all units with
propensity score less than 0.1 and more than 0.9 are discarded (this
is a common practice in causal inference). We observe that all
matching methods produce well-balance matched samples with reasonable
sizes. However, among these methods CEM works better both in terms of
the imbalance reduction and size of the matched sample.

The slight differences between
the results of NNM arose from a slight
difference between the propensity score distribution we obtained using logistic regression provided by MADlib inside Postgres. \ignore{ In addition, observe that subclassification in \GSQL \ did better than
it counterpart in R. This is because, }
%\cite{Hellerstein2012}

\ignore{
As represented in Table \ref{tbl:qc}, applying this caliper still provide us with a pretty reasonable sample size. Tables

This would increase the performance of
our NNM matching methods. We note that more sophisticated technics for performing NNM in SQL exists e.g., \cite{d}. However,
propensity score matching was not the main concern of our paper.}

\begin{table*}[t]
  \centering  \scriptsize
  \begin{tabular}{|c|c|c|c|c|c|c|c|} \hline
    & \multicolumn{3}{c|}{}   & \multicolumn{4}{c|}{\bf{Absolute Weighted Mean Difference (AWMD)}} \\ \hline
   \bf{Method} &  & \bf{Control} & \bf{Treated}  & \bf{Visibility} & \bf{WindSpeed}& \bf{AirportTraffic} & \bf{CarrierTraffic} \\ \hline
    & { \em Raw Data }
 & 214457 & 464 & 12.7529& 3.6363
&4.5394&2.9465 \\ \hline
 \raisebox{-.5\normalbaselineskip}[1pt][1pt]{\rotatebox[origin=c]{0}{\bf{NNMWR}}}&
R & 311 & 323
   &  0.0724
 & 0.8789
 &0.6935  &0.2724
 \\
 &\GSQL			& 296		&		312	&	 0.0692&	  0.5756&	0.6955	&0.8044
\\ \hline
 \raisebox{-.5\normalbaselineskip}[1pt][1pt]{\rotatebox[origin=c]{0}{\bf{NNMNR}}}&   R & 318
 & 318
 & 0.1006
 &1.0308
 & 0.3396&0.1352
\\
& \GSQL &	 291		&		291&0.0769&	0.7216& 	0.3195	& 0.5910
\\ \hline
 \raisebox{-.5\normalbaselineskip}[1pt][1pt]{\rotatebox[origin=c]{0}{\bf{Subclass.}}} &
 R &   1275 &255 & 2.5054
  & 1.4631
 & 1.6013&1.0022
\\
& \GSQL &1002 &
255& 0.0684&  1.0631&  0.1872&  0.0905
\\ \hline
 \raisebox{-.5\normalbaselineskip}[1pt][1pt]{\rotatebox[origin=c]{0}{\bf{EM}}} &      R & 8
 & 7
 & 0
 & 0
 & 0&0
 \\
 & \GSQL & 8
 & 7
 & 0
 & 0
 & 0&0
 \\ \hline
  \raisebox{-.5\normalbaselineskip}[1pt][1pt]{\rotatebox[origin=c]{0}{\bf{CEM}}} &     R &2284&
340&
0.2875 &  0.0542&   0.1135& 0.0905 \\
& \GSQL &2284&
340&
0.28755 &  0.0542&   0.1135& 0.0905 \\ \hline

    \end{tabular} \vspace{0.5cm}
    \caption{\bf{Quality Comparison between \GSQL \ and R.}}
  \label{tbl:qc}
\end{table*}

\vspace{-.25cm}
\subsubsection{Scalability Testing}
\label{sec:sct}
We compared the scalability of different matching methods in \GSQL \ and R. The experiment was carried out over
a random samples of data used in Section \ref{sec:endtoend} and for the treatment of Snow. Figure \ref{fig:perfresults}(a) compares NNM methods based on propensity score. We observe that NNM does not scale to large data. Note that, for the reasons mentioned in Section \ref{sec:nnm}, optimizing this method was not the aim of this paper. Figure \ref{fig:perfresults}(b)  compares CEM, EM and subclassification in \GSQL \ and R. Note that for CEM and NNMWR in \GSQL\, we respectively implemented the group-by and window function statement. As depicted,  \GSQL \ significantly outperforms R and scale to large data.

\begin{figure*}
\hspace*{-0.7cm}\begin{subfigure}{0.59\linewidth}
\centering
\includegraphics[height=5cm,width=1.02\linewidth]{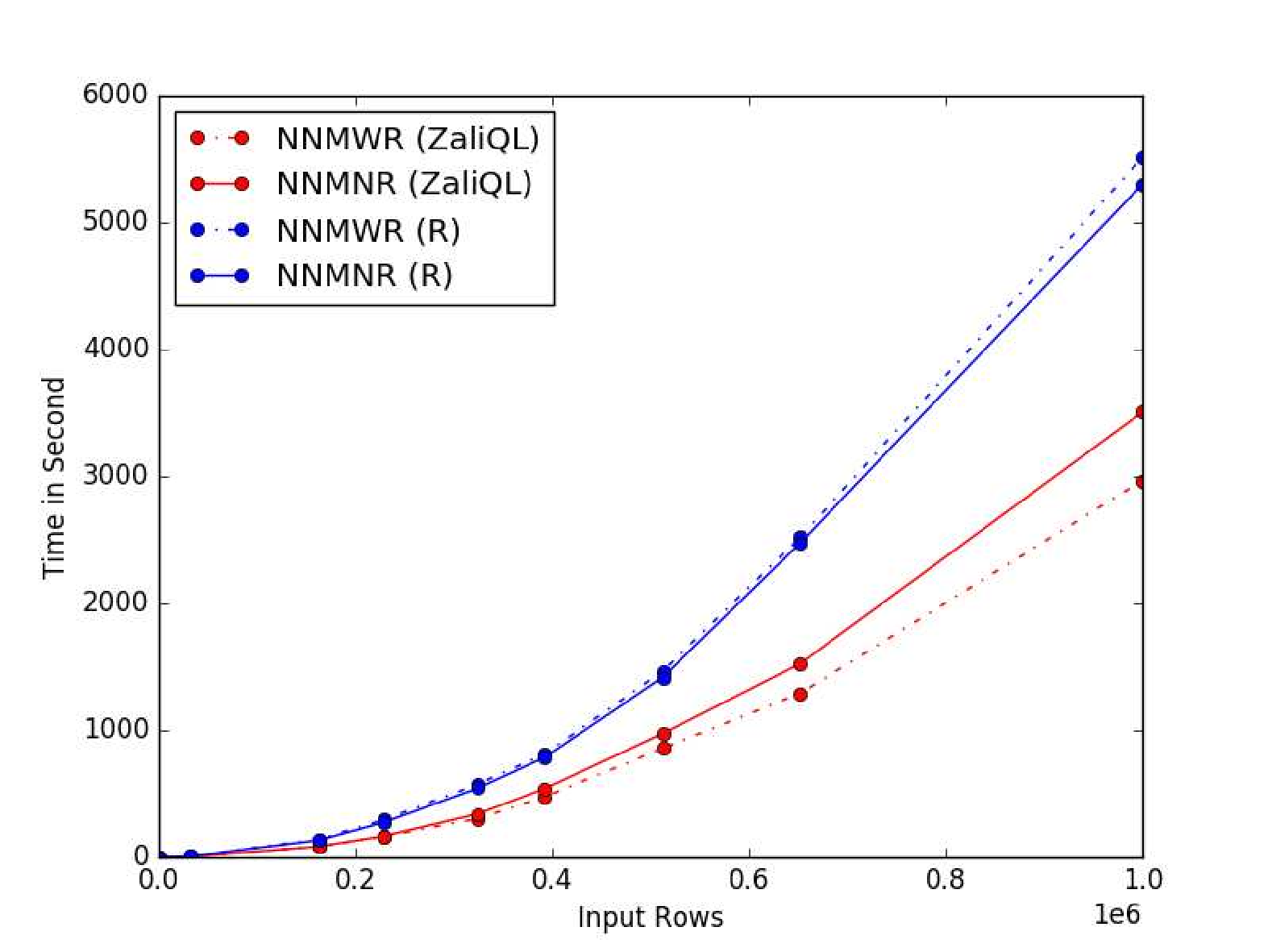}
\caption{{Scalability Comparison (NNM).}}
\label{sfig:testaa}
\end{subfigure}\hfill
 \hspace*{-0.45cm}\begin{subfigure}{0.59\linewidth}
\centering
\vspace*{0.41cm} \includegraphics[height=5cm,width=1.02\linewidth]{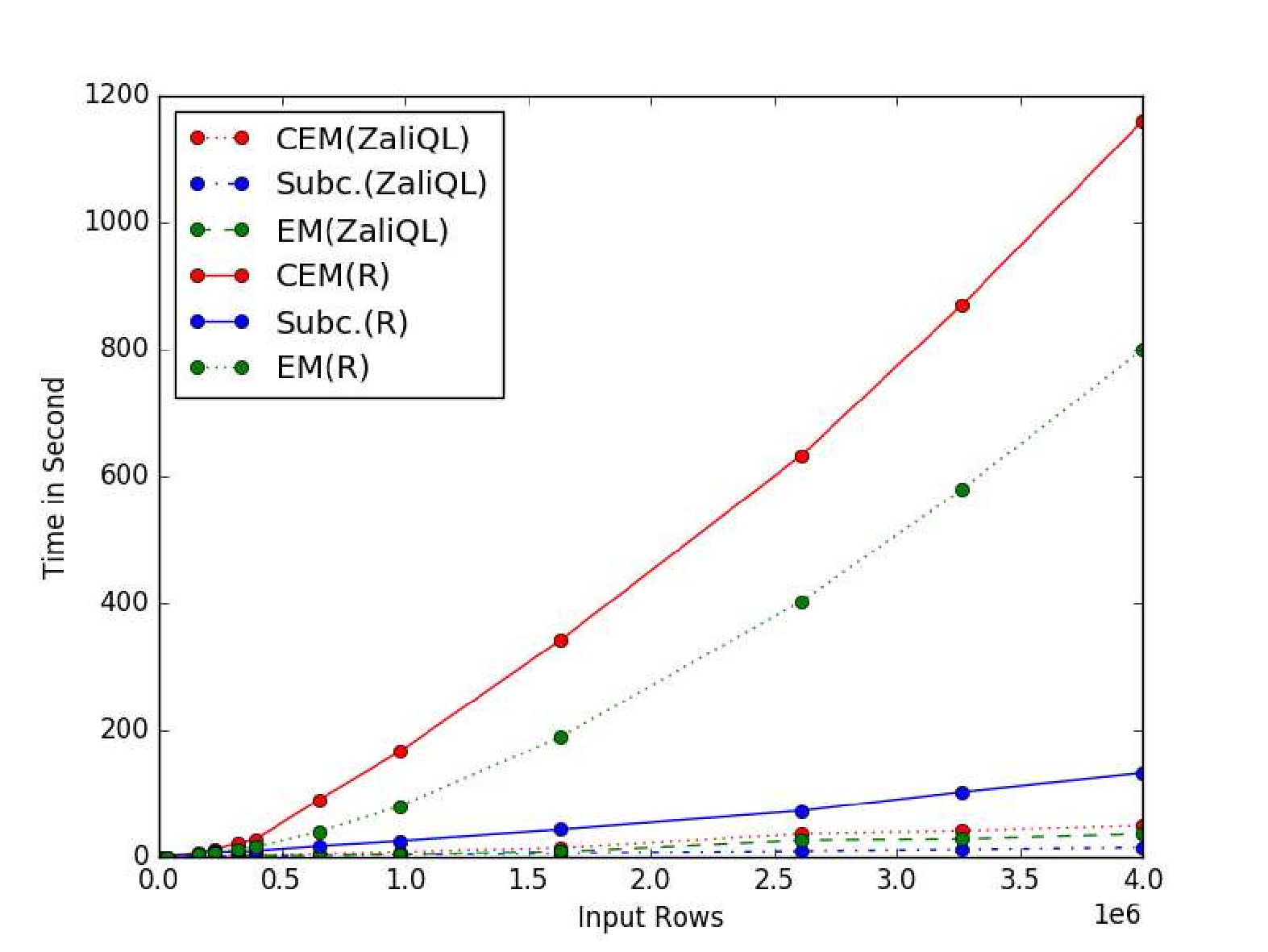}
\caption{{Scalability Comparison (CEM, EM and Subclas.).}}
\label{sfig:testbb}
\end{subfigure}\hfill

\hspace*{-0.55cm}\begin{subfigure}{0.59\linewidth}
\centering
\includegraphics[height=5.5cm,width=1.02\linewidth]{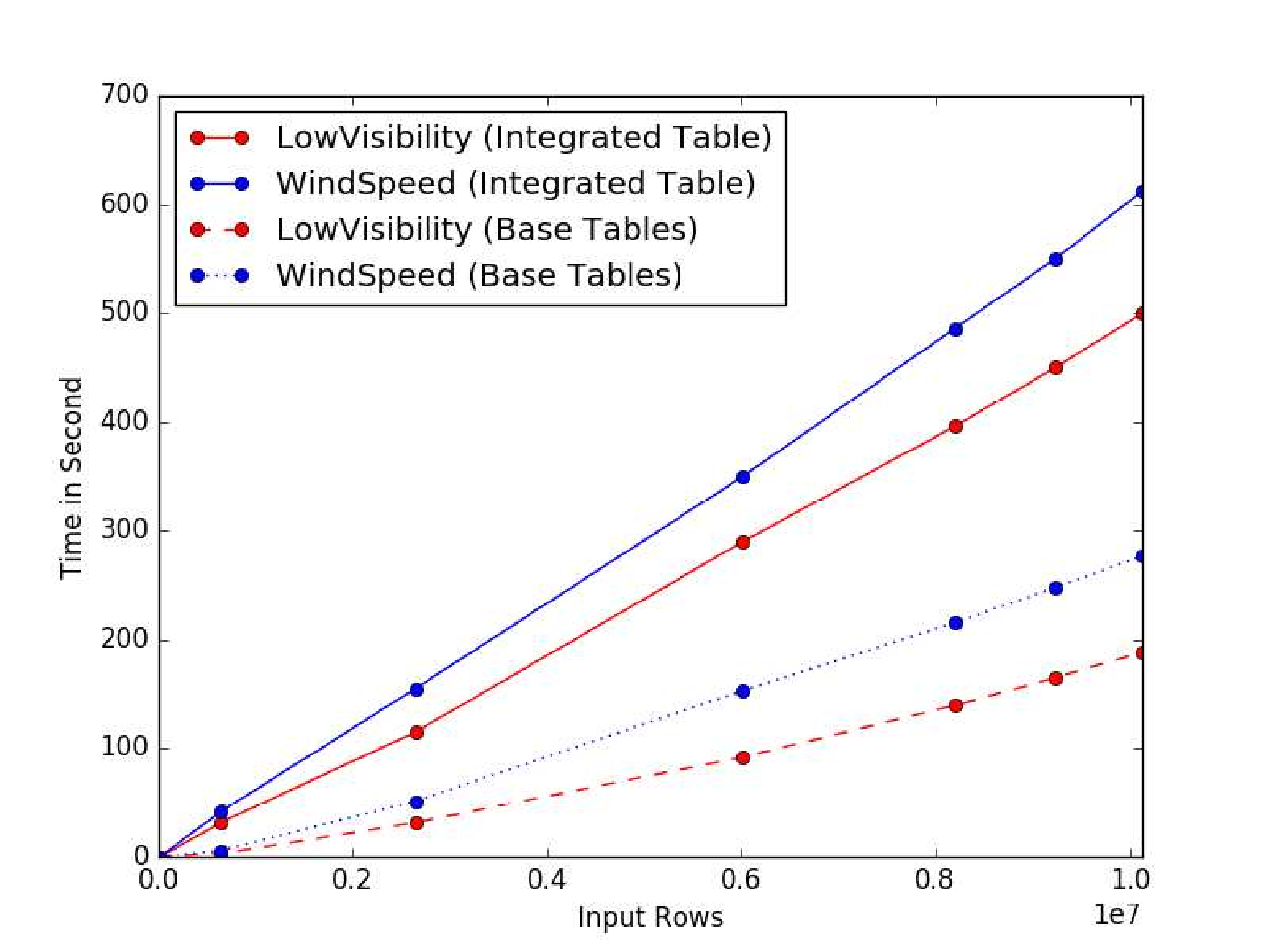}
\caption{Efficacy of Pushing Matching.}
\label{sfig:testbc}
\end{subfigure}\hfill
\hspace*{-0.5cm}\begin{subfigure}{0.59\linewidth}
\centering \vspace{0.4cm}
\includegraphics[height=5.5cm,width=1.02\linewidth]{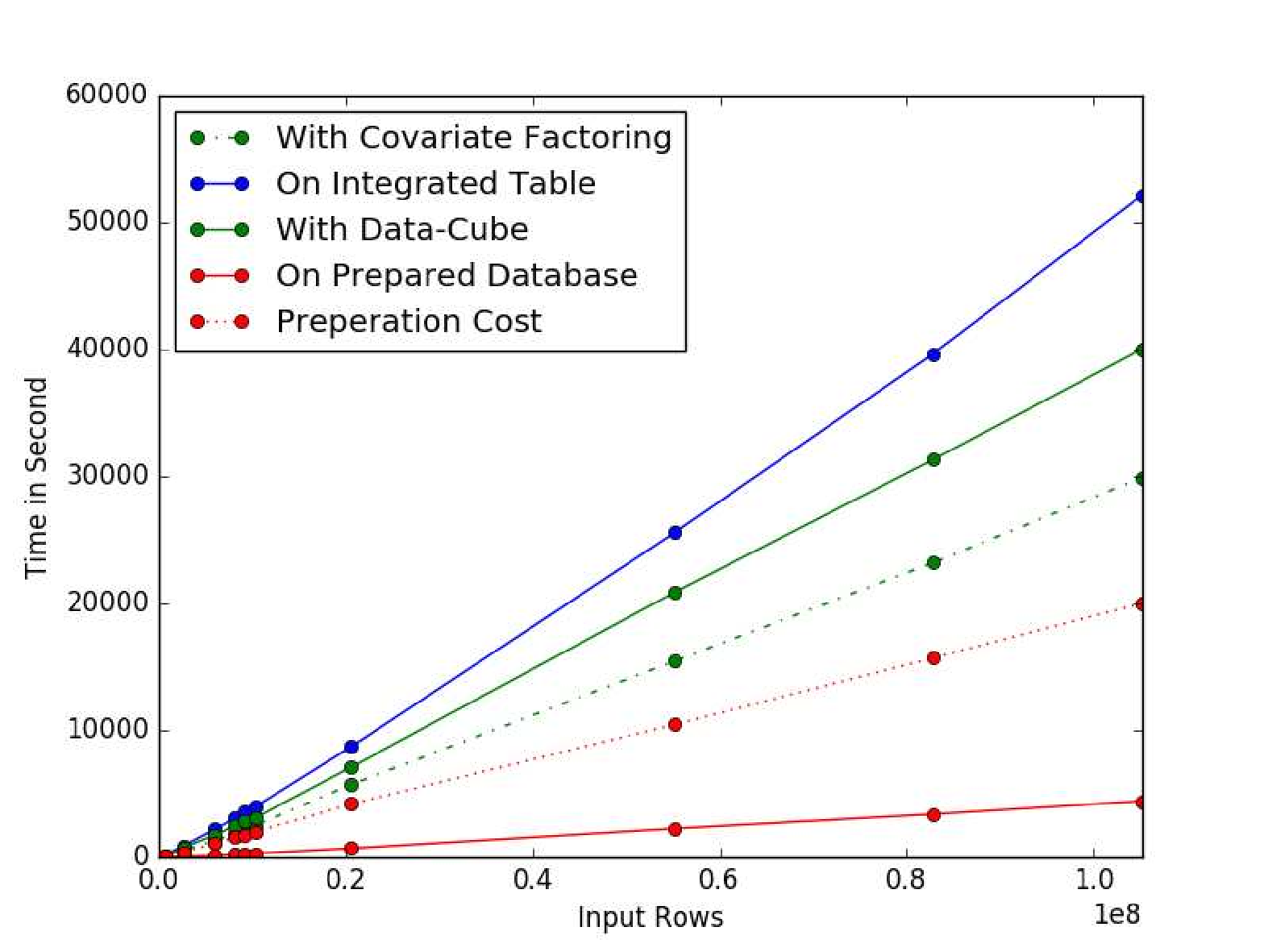}
\caption{Efficacy of optimizations for multiple treatments.}
\label{sfig:testbd}
\end{subfigure}\hfill

\caption{\bf{Scalability and optimizations Evaluation.}}
\label{fig:perfresults}
\end{figure*}

\vspace{-.25cm}
\subsubsection{Efficacy of the Optimization Techniques}
\label{sec:opt}
We tested the effectiveness of the proposed optimization techniques (cf. Section \ref{sec:OptimizationTechniques}). Figure \ref{fig:perfresults}(c) compares the running time of performing CEM on
the integrated weather and flight tables with CEM on base tables (cf. Section \ref{sec:baserel})
for two treatment LowVisibility and WindSpeed.  The analysis was carried out on data used in Section \ref{sec:endtoend}. Note that the cost of integrating two tables is ignored.

For covariates factoring and data-cube optimizations, we compared the
total cost of performing matching on the treatments defined in Section
\ref{sec:setup}, plus the treatment obtained by conjunction of Snow and
WindSpeed, which we refer to as Snowstorm. This analysis was carried
on the entire integrated dataset. By applying Algorithm \ref{al:cf1},
the treatments are partitioned into two groups, namely $g_0$=\{Snow,
WindSpeed, Snowstorm\} and $g_2$=\{LowVisibility, Thunder\}.  Figure
\ref{fig:perfresults}(d) compares the total running time of matching
for each treatment after covariate factoring with the naive matching
procedure. Figure \ref{fig:perfresults}(d)  also shows the total cost of matching with and without using data-cubes.
 In addition, it represents the total cost of matching on the prepared database, according to Algorithm \ref{alg:dp}, and the cost of database preparation. As depicted,  matching on the prepared database
reduce the over cost of matching by an order of magnitude.  \ignore{Observe that the total cost of performing matching on the prepared database plus the cost of preparation is less than matching using only data-cube and covariates factoring. \ignore{Therefore, Algorithm \ref{alg:dp} can also be applied to the setting that }}

\ignore{
{\em Matching on bases tables:} Figure \ref{fig:perfresults}(b) compares the running time of performing CEM on
the pre-integrated weather and flight tables with CEM on base tables (cf. Section \ref{sec:baserel}). The comparison
is made on different sample sizes. Note that the cost
of joining two tables is ignored.

{\em Factorization:} we evaluate the factorization technique to  optimize the overall cost of performing matching
for $T_1$ \ldots $T_4$. To this end first we apply the Algorithm
and obtained the following treatment partitions.  [...] Figure \ref{fig:perfresults}(c)

{\em Data-cubes:} We used the data-cube technique to optimize the overall cost of performing matching
for $T_1$, $T_2$, $T_3$ and any treatments
that can be defined by conjunction of these treatments e.g., we can define $T_{12}$ as 1 if $T_1$ and $T_2$
are 1 and 0 otherwise. Therefore, we have seven treatments in total. Figure  \ref{fig:perfresults}(d) compares the over cost of performing matching for these seven treatments independently and computing them based on the ton-down approach using
data-cubes.
}

\vspace{-.42cm}
\section{Related work and conclusion}
%\dan{7 pages}
\label{sec:rel}

\ignore{
The study of causality in statistics can be traced back to Neyman~\cite{neyman1923},
Fisher~\cite{Fisher1935design}, Rubin~\cite{Rubin1974,Rubin2005} and
Holland~\cite{Holland1986}, whose theories  led to the {\em potential outcome framework} for causal
inference. More recently, Pearl~\cite{PearlBook2000}, Spirtes et al.~\cite{Spirtes:book01} and
others have developed causal graphical models, which provide a graphical way to describe
causal relationships and support reasoning over multiple causal relationships. However \ignore{
In both cases, causal inference  critically depends on the precise methodology and on key assumptions about the data,
without which causal relationships cannot be claimed.}  See \cite{richardson2013single} for a connection between these two approaches.}
The simple nature of the RNCM, and its adherence to a few statistical assumptions, makes it more appealing for the researchers. Therefore,
 it has become the  prominent approach in social sciences, biostatistics, political science, economics
 and other disciplines. Many toolkits have been developed for performing casual inference  {\em $\acute{a}$ la} this framework that
 depends on  statistical software such as SAS, SPSS, or R project. However, these toolkits do not scale to large datasets. This work introduce  \GSQL,  a SQL-based framework for drawing causal
  inference that circumvents the scalability issue with the existing
  tools. ZaliQL supports state-of-the-art methods for causal
  inference and runs at scale within a database engine.

Causality has been studied extensively in databases \cite{DEBulletin2010,MeliouGMS2011,RoyS14,DBLP:conf/icdt/SalimiB15,SalimiTaPP16,DBLP:conf/flairs/SalimiB16,DBLP:conf/uai/SalimiB15}.  We note that this line of work is different than the present paper in the sense that, it aims to identify causes for an observed output of a data transformation. \ignore{For example, in query-answer causality/explanation, the objective is to identify parts of a database that causally explain a result of a query. } While these works share some aspects of the notion of causality as studied in this paper, the problems that they address are fundamentally different. 
\vspace{-.45cm}
\bibliographystyle{abbrv}
\bibliography{ref}
\end{document}